\newcommand{\etal}{\emph{et al.}}
\newcommand{\ecc}{\mathrm{ecc}}
\newcommand{\rad}{\mathrm{rad}}
\newcommand{\diam}{\mathrm{diam}}
\newcommand{\ds}{\mathrm{ds}}
\newcommand{\dsw}{\mathrm{ds}^w}
\newcommand{\lev}{\mathrm{lev}}
\newcommand{\ol}[1]{\overline{#1}}
\newcommand{\tw}[1]{\texttt{#1}}
\theoremstyle{definition}
\newtheorem{definition}{Definition}[section]
\newtheorem{note}[definition]{Note}
\newtheorem{example}[definition]{Example}
\theoremstyle{plain}
\newtheorem{theorem}[definition]{Theorem}
\newtheorem{proposition}[definition]{Proposition}
\newtheorem{lemma}[definition]{Lemma}
\newtheorem{corollary}[definition]{Corollary}
\title{Quasi-Isometric Graph Simplification}
\author{Roger Su \and Bakhadyr Khoussainov \and Simone Linz}
\date{\today}
\begin{document}


\maketitle

\begin{abstract}
Quasi-isometries are mappings on graphs, with distance-distortions parameterized by a multiplicative factor and an additive constant. The distance-distortions of quasi-isometries are in a general form that captures a wide range of distance-approximating graph simplifications. This paper introduces quasi-isometries into the field of graph simplifications, which is becoming increasingly important as large-scale graphs gain more and more prevalence. We discuss some general goals of graph simplification under the framework of quasi-isometries, and investigate several constructions of quasi-isometric graph simplifications, namely one based on maximal independent sets and one based on grouping vertices. For the latter construction, we prove that it preserves the centers and medians of trees.

\end{abstract}

\section{Introduction}\label{sec:intro}

In the last few decades, there has been a significant interest in the study of large-scale graphs that arise from modelling social networks, web graphs, biological networks, among others. Many of these graphs have millions or even billions of vertices or edges, so computing on these graphs is challenging task, as they render conventional algorithms inefficient or impractical.

The emergence of large-scale graphs urges us to revisit fundamental and graph concepts and algorithms, such as the number of vertices, the number of connected components, the distance between two given vertices, the sparseness or denseness, the degree distribution, the central vertices, and other global or local properties.

In order to tackle these questions for large-scale graphs, researchers devise graph simplification techniques, which are also called graph preprocessing or graph summarization~\cite{liu-18}. These techniques transform an input graph into a smaller graph or a concise data structure, such that computations on the input graph correspond to more efficient computations on the simplified structure. Different computational goals call for different graph preprocessing methods. For example, the efficiency of exact distance queries can be increased by adding so-called \emph{2-hop labels} to the vertices~\cite{cohen-03}, while the computation of a minimum cut can be improved by removing certain edges and resulting in a \emph{cut sparsifier}~\cite{benczur-15}.

Within the field of graph preprocessing, a prominent class of methods involve partitioning the vertices to aid visualization or computation. These methods come under the names of \emph{community detection}~\cite{fortunato-10}, \emph{graph partitioning}~\cite{bichot-13} and \emph{graph clustering}~\cite{kannan-04}. The primary goal of graph clustering is to discover dense subgraphs called ``clusters'' with few inter-cluster edges.
Preprocessing of graphs can also be viewed from the angle of data compression. \emph{Graph compression}~\cite{besta-18-survey,fan-22-contraction} is also based on grouping vertices, while storing additional information in order to recreate the original graph.

Among the various types of computations on graphs, the most fundamental is the distance query, which asks for the length of a shortest path between two vertices. The distance query is important because it is the basis of many other query types in fields such as \emph{transportation planning}~\cite{bast-16}, \emph{network design}~\cite{miller-13}, \emph{operational research}~\cite{slater-82} and \emph{graph databases}~\cite{graph-databases}. When the graph is large, direct computation of the distance becomes impractical, so there is a need for preprocessing methods that can efficiently answer distance queries with a certain degree of guaranteed accuracy.

There are may preprocessing methods to handle approximate distance queries. One of them is the construction of \emph{spanners}. On a connected graph $G$, an $(\alpha,beta)$-\emph{spanner} is a spanning subgraph $H$ with integer parameters $\alpha \geq 1$ and $\beta \geq 0$, such that for all vertices $v_1$ and $v_2$,
\begin{equation}\label{eq:spanner}
  d_H(v_1,v_2) \leq \alpha \cdot d_G(v_1,v_2) + \beta.
\end{equation}
Spanners were first introduced by Peleg and Sch\"affer~\cite{peleg-89}, who showed that it is NP-complete to compute the minimum $(\alpha,0)$-spanner, but for every integer $1\leq k\leq n$, there exists a $(4\log_k n + 1, 0)$-spanner with at most $kn$ edges and constructible in polynomial time. The field of spanners was subsequently expanded by many researchers on different algorithmic fronts, recently surveyed by Ahmed \etal~\cite{ahmed-20-spanner-survey}. 

The construction of spanners involves edge-deletions. On the other hand, one can also reduce a graph using edge-contraction. When $G$ is transformed into $H$ by contracting edges, there is a natural surjection $f$ from the vertices of $G$ to those of $H$, where adjacent vertices $v_1,v_2 \in G$ are mapped to a same vertex in $H$ if the edge $\{v_1,v_2\}$ is contracted.

Given real-valued constants $\alpha \geq 1$ and $\beta \geq 0$, Bernstein \etal~\cite{bernstein-19} studied the optimization problem of finding a minimal set of edges to contract, such that for all vertices $v_1$ and $v_2$ in $G$,
\begin{equation}\label{eq:contraction}
  \frac{1}{\alpha} \cdot d_G(v_1,v_2) - \beta = d_H\big( f(v_1), f(v_2) \big).
\end{equation}
Contraction can be viewed as assigning vertices into groups, which is related to graph clustering and graph compression as mentioned earlier~\cite{fan-22-contraction}.

Another distance-approximating graph preprocessing technique is \emph{distance oracles}, which were first invented by Thorup and Zwick~\cite{thorup-05}, and subsequently extended by many researchers, including the recent work by Charalampopoulos \etal~\cite{char-19}. While spanners and contractions produce smaller graphs, distance oracles are not themselves graphs, but are intricate data structures that can be constructed efficiently, and can return reasonably accurate distance queries.

The notion of distance-distortion is also related to \emph{metric embeddings}~\cite{metric-embed-abraham-11,metric-embed-ostrovskii-13}, a technique to develop approximation algorithms for graphs as well as general computational problems that involve metric spaces, such as graphs, computer vision and computational biology.
Metric embeddings typically map the original objects into the $\ell_p$-space, with the main goals being small dimension of the target space and small distance-distortions.
Embeddings are by nature injective and usually not surjective. This is different to graph simplification, which is surjective and not injective.

In this paper, we continue the lines of research on approximate distance-preservation, and introduce quasi-isometries to the active field of graph simplifications. A quasi-isometry is a mapping from one graph to another, and has parameters that satisfy a general inequality whose form captures both (\ref{eq:spanner}) and (\ref{eq:contraction}). Quasi-isometries will be formally introduced in Section~\ref{sec:q-iso}. Before then, we first summarize some general goals of graph simplifications in Section~\ref{sec:setup}.

\subsection{General Goals for Graph Simplification}\label{sec:setup}

The aim of quasi-isometric graph simplification is to find a quasi-isometry of a large-scale graph $G$ into some smaller but non-trivial graph $H$, such that the quasi-isometry has small constants, and at the same time the graph $H$ retains important properties of the original graph $G$.

Let $Q$ be an abstract property of graphs, which can be a \emph{local} predicate on the vertices (such as ``being a central vertex'' or ``having the maximum degree'') or a \emph{global} property of the graph (such as ``being a tree'' or ``being chordal'').

Let $\mathcal{K}$ be a class of graphs. Given $G\in\mathcal{K}$ and a property $Q$. The aim is to construct a smaller graph $H$ with a quasi-isometry $f : G \rightarrow H$ such that the following properties are satisfied.

\begin{enumerate}
  \item \textbf{Small quasi-isometry constants}: This property controls the distortion between the distance-functions in $G$ and $H$. This also avoids collapsing $G$ into the trivial singleton-graph, as collapsing a graph into the singleton-graph requires large constants.
  \item \textbf{Compression}: By requiring the number of vertices in the simplified graph $H$ to be a fraction of the number of vertices in $G$, this ensures that $H$ is a meaningful size-reducing simplification.
  \item \textbf{Preservation}: The property $Q$ should be well-behaved with respect to $f$. This is free to a reasonable interpretation. For instance, one can demand that for all $x\in G$, if $x$ satisfies $Q$, then the vertex $f(x)$ should also satisfy $Q$.
  \item \textbf{Efficiency}: Building $f$ and $H$ should be efficient on the size of $G$. From an algorithmic view point, this property is natural and is directly linked with issues related to the preprocessing of large-scale graphs.
  \item \textbf{Retention}: $H$ should be in the same class $\mathcal{K}$ as $G$. In other words, the quasi-isometry $f$ should retain the key algebraic properties of $G$.
\end{enumerate}
These are the general goals of quasi-isometric graph simplifications. In later parts of the paper we will refer to them.

\subsection{Our Contributions}\label{sec:outline}

Below we list our contributions to the area on large-scale graphs, with connections to the list in Section~\ref{sec:setup}.
\begin{enumerate}
  \item We propose a general formal framework to study large-scale graphs based on quasi-isometries. We provide several simple constructions that quasi-isometrically map large graphs to smaller graphs.
  \item In order to build quasi-isometries of graphs, we introduce the notion of \emph{partition-graphs} in Section \ref{sec:partition-graphs}. These are simplified graphs built from any given graph $G$ by grouping vertices. We show that partition-graphs are quasi-isometric to the original graph, with the quasi-isometry constants depending on the diameters of the super-vertices, and the compression property depending on the cardinalities of the super-vertices.
  \item We investigate the question if the vertices in the center of a given graph are preserved under quasi-isometries. Among the countless different notions of graph centrality, we focus on the two most basic: the \emph{center} and the \emph{median}, both of which are defined in terms of the distance. In order to capture the effect of graph simplifications on the center, Section~\ref{sec:center-shift} introduces the concept called the \emph{center-shift}. Given a quasi-isometric graph simplification $\varphi:G\rightarrow H$, with $C_G$ and $C_H$ being the respective centers of $G$ and $H$, the center-shift measures the distance in $G$ between $C_G$ and $\varphi^{-1}(C_H)$.
  \item It turns out that a quasi-isometry alone is not strong enough to bound the center-shift. As shown by Theorem~\ref{thm:center-shift-2side}, the center-shift of a mapping $\varphi:G\rightarrow H$ is bounded above by a function involving the radius of $H$, given $G$ satisfies a special property.
  \item  We then focus on trees.
  Trees already provide interesting counter-examples (such as Figure~\ref{fig:tree-comb}), which suggests that even on trees, quasi-isometric simplifications needs to be constructed with care, depending on the query.
  Furthermore, although trees are simple objects, they are nevertheless used in many fields such as mathematical phylogenetics~\cite{semple-03} and optimization~\cite{wu-chao-04}.
  Section~\ref{sec:outward-contraction} shows that the method of \emph{outward-contraction} produces partition-trees with center-shift zero, which means that outward-contraction preserves the centers of trees. 
  \item Finally, Section~\ref{sec:vertex-weights} shows that to preserve the median of trees, we need to store extra numerical information. Without this numerical information, there are cases where outward-contraction does not preserve the medians of trees. However, if we store the cardinality of each vertex-subset in the partition, and handle the graph as a vertex-weighted graph, then we can locate the median of the original tree from the partition.
\end{enumerate}
In terms of our problem set-up in Section~\ref{sec:setup}, our contributions 1--2 address issues related to small quasi-isometry constants and compression, while 3--6 focus on the preservation of the center (as the property $Q$) under specific quasi-isometric simplification of trees.

\section{Preliminaries}\label{sec:prelim}

In this paper, all graphs are assumed to be undirected, finite, and without loops or parallel edges. In formal terms, a \emph{graph} $G$ is a pair $\langle V(G), E(G) \rangle$, where $V(G)$ is a finite set of \emph{vertices}, and $E(G)$ is a set of \emph{edges} (which are 2-element subsets of $V(G)$).
Two vertices $v_1$ and $v_2$ are \emph{adjacent}, denoted $v_1\sim v_2$, when $\{v_1,v_2\} \in E(G)$.
We sometimes write $G$ to mean the vertex-set $V(G)$ when no ambiguity arises, and $|G|$ denotes the number of vertices in $G$.

In a graph, a \emph{path} $\mathbf{v}$ is a sequence of vertices $v_1,v_2,\ldots v_{\ell}$ such that $v_i \sim v_{i+1}$ for all $1\leq i<\ell$. A \emph{simple path} is a path with all vertices distinct. Every path can be reduced to a simple path with the same endpoints. The \emph{length} of a (simple) path is the number of edges. Two vertices are \emph{connected} when they are endpoints of some path. A graph is \emph{connected} when every pair of vertices is connected. In this paper, all graphs are assumed to be connected.

The \emph{path-graph} on $n$ vertices, denoted $P_n$, is the graph on vertices $v_1,v_2,\ldots v_n$ such that $v_i \sim v_{i+1}$ for all $1\leq i<n$.


The \emph{distance} between two vertices $v_1$ and $v_2$, denoted $d(v_1,v_2)$, is the length of a shortest path between $v_1$ and $v_2$.
For two vertex-sets $S_1,S_2 \subseteq V(G)$, the distance $d(S_1,S_2)$ is defined to be $\min\{ d(v_1,v_2) \mid v_1\in S_1, v_2\in S_2 \}$,
while the distance between a vertex $v$ and a vertex-set $S$ is $d(v,S) = \min\{ d(v, x) | x\in S \}$.


\begin{definition}[Center]\label{defn:center}
The \emph{eccentricity} of a vertex $v$ is the maximum distance from $v$ to any other vertex: $\ecc(v) = \max\{ d(v,x) \mid x\in G \}$.
The \emph{eccentricity-witnesses (ecc-wits)} of a vertex $v$ are the vertices $x$ such that $\ecc(v) = d(v,x)$.

The \emph{center} of a graph $G$ (denoted $C_G$) is the set of vertices with the minimum eccentricity.
\end{definition}

\begin{proposition}\label{prop:ecc-diff}
For vertices $v_1,v_2$ in a graph, $|\ecc(v_1) - \ecc(v_2)| \leq d(v_1,v_2)$.
\end{proposition}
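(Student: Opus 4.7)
The plan is to prove the inequality by the triangle inequality for the graph distance, applied to an eccentricity-witness. Since the inequality involves an absolute value, I would split it into the two one-sided inequalities $\ecc(v_1) - \ecc(v_2) \le d(v_1,v_2)$ and $\ecc(v_2) - \ecc(v_1) \le d(v_1,v_2)$, and then note that the roles of $v_1$ and $v_2$ are symmetric, so it suffices to prove one of them.

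For the one-sided inequality, I would pick an eccentricity-witness $x$ of $v_1$, so that $\ecc(v_1) = d(v_1, x)$ by Definition~\ref{defn:center}. The key step is then the triangle inequality for the graph distance (obtained by concatenating a shortest path from $v_1$ to $v_2$ with a shortest path from $v_2$ to $x$, giving a walk of length $d(v_1,v_2) + d(v_2,x)$ from $v_1$ to $x$, which can be reduced to a simple path of no greater length). This yields
\begin{equation*}
  \ecc(v_1) = d(v_1, x) \;\le\; d(v_1, v_2) + d(v_2, x) \;\le\; d(v_1, v_2) + \ecc(v_2),
\end{equation*}
where the last step uses that $\ecc(v_2) \ge d(v_2, x)$ by definition of eccentricity. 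Rearranging gives one side of the bound, and swapping $v_1 \leftrightarrow v_2$ gives the other.

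I do not anticipate any real obstacle here: the triangle inequality for shortest-path distances is standard, and follows directly from the path/length definitions given in Section~\ref{sec:prelim}. The only thing worth being careful about is that the excerpt defines paths as walks (possibly repeating vertices) and then reduces them to simple paths with the same endpoints without increasing length, so the triangle inequality on $d$ is legitimate and needs no separate lemma.
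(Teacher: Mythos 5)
Your proposal is correct and follows essentially the same argument as the paper: both pick an eccentricity-witness of $v_1$, apply the triangle inequality through $v_2$, and bound $d(v_2,x)$ by $\ecc(v_2)$, with the absolute value handled by symmetry. No gaps.
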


\begin{proof}
Without loss of generality, assume $\ecc(v_1) \geq \ecc(v_2)$. Also let $\ecc(v_1) = d(v_1, v_1')$ and $\ecc(v_2) = d(v_2, v_2')$. Now,
\begin{align*}
   \ecc(v_1) - \ecc(v_2)
&= d(v_1,v_1') - d(v_2,v_2') \\
&\leq d(v_1,v_1') - d(v_2,v_1') &&\text{($v_2'$ is an ecc-wit of $v_2$)}\\
&\leq d(v_1,v_2) &&\text{(triangle inequality).}
\end{align*}
\end{proof}

\begin{note}\label{note:leaf-removal}
It is well known that the center of a tree consists of a single vertex or two adjacent vertices, and that the center of a tree can be located by a \emph{leaf-removal} algorithm~\cite{goldman-71}. At the start, leaf-removal removes all the leaves of the input tree $T$, and results in a smaller tree $T_1$. Next, leaf-removal removes all the leaves of $T_1$ and results in $T_2$. This process repeats until only a single vertex or two adjacent vertices remain. Then the final remaining vertices are the center of $T$. Later parts of this paper will refer to this algorithm.
\end{note}

The \emph{radius} of $G$ is the minimum eccentricity: $\rad(G) = \min\{ \ecc(x) \mid x\in G \}$.
The \emph{diameter} of $G$ is the maximum eccentricity: $ \diam(G) = \max\{ \ecc(x) \mid x\in G \}$.
A \emph{diameter-path} is a path whose length equals the diameter.

\begin{definition}[Distance-sum and median]
The \emph{distance-sum} of a vertex $v$ is defined to be \[ \ds(v) = \sum_{x\in G} d(v,x) .\]
The \emph{median} of a graph is the set of vertices with the minimum distance-sum.
\end{definition}

\subsection{Quasi-Isometries}\label{sec:q-iso}

Our aim is to devise a general framework for distance-preserving graph simplifications, and we base our framework on the notions of \emph{quasi-isometries} and \emph{large-scale geometries} introduced by Gromov~\cite{gromov-81} in algebraic geometry.
Quasi-isometries can be viewed as bi-Lipschitz maps on metric spaces with a finite additive distortion, and they lead to the concept of large-scale geometries, which turned out to be crucial in the study of infinite algebraic objects that are finitely generated (such as groups and their Cayley graphs).
Later, quasi-isometries and large-scale geometries were applied to infinite trees by Kr\"on and M\"oller~\cite{kroen-08} as well as to infinite strings by Khoussainov and Takisaka~\cite{khou-17}.
We first recall the definition of metric spaces, on which quasi-isometries are defined.

\begin{definition}[Metric space]\label{def:metric-space}
A \emph{metric space} $\langle M, d \rangle$ consists of a set $M$ and a function $d$ that maps every two elements in $M$ to a non-negative real numbers. The function $d$ is called the \emph{distance} or the \emph{metric}, and is required to satisfy the following axioms:
\begin{itemize}
  \item (M1) $\forall x,y\in M: x=y$ if and only if $d(x,y)=0$.
  \item (M2) $\forall x,y\in M: d(x,y) = d(y,x)$.
  \item (M3) $\forall x,y,z\in M: d(x,z) \leq d(x,y) + d(y,z)$.
\end{itemize}
\end{definition}

Every connected graph is a metric space with the shortest-path distance.

For metric spaces $\langle M_1, d_1 \rangle$ and $\langle M_2, d_2 \rangle$, a mapping $f:M_1\rightarrow M_2$ is called an \emph{isometry} when
\[ \forall x,y\in M_1 : d_1(x,y) = d_2\big( f(x), f(y) \big) .\]
A quasi-isometry is a more general version of an isometry, as stated in the following definition, first attributed to Gromov~\cite{gromov-81}.

\begin{definition}[Quasi-isometry]\label{def:q-iso}
Let $\langle M_1, d_1 \rangle$ and $\langle M_2, d_2 \rangle$ be metric spaces, and let $A,B,C$ be non-negative integers with $A\geq 1$.
Then a function $f:M_1\rightarrow M_2$ is called an \emph{$(A,B,C)$-quasi-isometry} if the following two properties are satisfied:
\begin{itemize}
  \item (Q1) $\forall x,y\in M_1$:
  \[ \frac{1}{A} \cdot d_1(x,y) - B \leq
     d_2 \big( f(x), f(y) \big) \leq
     A \cdot d_1(x,y) + B .\]
  \item (Q2) $\forall y\in M_2 : \exists x\in M_1 : d_2 \big( y, f(x) \big) \leq C$. 
\end{itemize}	
Property (Q1) is called the \emph{quasi-isometric inequality}, which is the general form that captures the inequalities of spanners (\ref{eq:spanner}) and distance-preserving graph contractions (\ref{eq:contraction}) in Section~\ref{sec:intro}.

Property (Q2) is called the \emph{density property}, which is more of a technical device to ensure that quasi-isometries form a symmetric relation. If $f$ is not surjective and $y$ is not mapped to by $f$, then $y$ is not too far away from an element that is mapped to by $f$. As Proposition~\ref{prop:q-iso} will show, the density property is trivial when $f$ is surjective. In the context of graph simplification, the mapping is always surjective, so the density property is not of central importance in this paper.

Among the \emph{quasi-isometry constants} $A$, $B$, $C$, the constant $A$ is called the \emph{stretch factor}, and $B$ the \emph{additive distortion}.
\end{definition}

\begin{proposition}\label{prop:q-iso}
A mapping $f$ on metric spaces satisfies the following:
\begin{enumerate}
  \item $f$ is a $(1,0,0)$-quasi-isometry if and only if $f$ is an isometry.
  \item $f$ is an $(A,B,0)$-quasi-isometry if and only if $f$ is surjective.
  \item $f$ is an $(A,0,C)$-quasi-isometry implies $f$ is injective.
\end{enumerate}
\end{proposition}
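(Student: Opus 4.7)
The plan is to unpack Definition~\ref{def:q-iso} three times; each part is essentially one observation about (Q1) or (Q2) together with the metric axioms. I would prove claim~2 first, because claim~1 uses it. One caveat worth flagging: the paper's definition of ``isometry'' does not explicitly demand surjectivity, so the backward direction of claim~1 is only correct under the standard convention (as in Gromov's setting) that an isometry is surjective, and I would proceed under that reading.

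For claim~2, with $C = 0$ the density property (Q2) becomes: for every $y \in M_2$ there exists $x \in M_1$ with $d_2(y, f(x)) \leq 0$. Since $d_2$ is non-negative, equality is forced, and then axiom (M1) gives $y = f(x)$, so $f$ is surjective. Conversely, if $f$ is surjective, then for each $y$ we pick $x \in f^{-1}(y)$, whereupon $d_2(y, f(x)) = 0$ trivially satisfies (Q2) with $C = 0$.

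For claim~1, the forward direction specializes (Q1) to $A = 1$, $B = 0$, which collapses the two-sided bound into the single equation $d_2(f(x), f(y)) = d_1(x, y)$; combined with the surjectivity furnished by claim~2 applied with $C = 0$, this is exactly the definition of an isometry. The converse is the same computation read backwards: an isometry makes (Q1) hold with $A = 1$, $B = 0$ by direct substitution, and claim~2 converts its surjectivity into (Q2) with $C = 0$.

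For claim~3, suppose $f$ is an $(A, 0, C)$-quasi-isometry and $f(x) = f(y)$. Then $d_2(f(x), f(y)) = 0$, so the lower bound of (Q1) with $B = 0$ reads $\frac{1}{A}\, d_1(x, y) \leq 0$. Since $A \geq 1$ this forces $d_1(x, y) = 0$, and (M1) yields $x = y$, so $f$ is injective. I do not anticipate any real obstacle here: every step is a one-line unfolding of a definition, and the only substantive point is the routine observation in claim~2 that a non-positive distance in a metric space must vanish.
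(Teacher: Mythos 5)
Your proof is correct and follows essentially the same route as the paper: each part is a direct unfolding of (Q1)/(Q2) against the metric axioms, with your claim~3 being the contrapositive of the paper's direct argument. Your caveat on claim~1 is well taken --- the paper's Definition~\ref{def:q-iso} context defines an isometry without requiring surjectivity, so the backward direction genuinely needs the convention you invoke (or an added surjectivity hypothesis), a point the paper's ``simply by definition'' glosses over.
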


\begin{proof}
Firstly, $(1,0,0)$-quasi-isometry is an isometry simply by definition. For the second fact, note that $C=0$ makes property (Q2) become \[ \forall y\in M_2 ~:~ \exists x\in M_1 ~:~ d_2 \big( y, f(x) \big) = 0 .\]
However, by axiom (M1), $d_2 ( y, f(x) ) = 0$ if and only if $y = f(x)$, which is precisely surjectivity.

Finally, to show that $B=0$ implies injectivity, let $x,y\in M_1$ with $x\neq y$. Then axiom (M1) in $M_1$ ensures $0 < d_1( x, y )$. Since $A = \alpha \geq 1$ and $B = 0$, the inequality $0 < 1/\alpha \cdot d_1( x, y )$ also holds. Then (Q1) leads to \[ 0 < \frac{1}{\alpha} \cdot d_1( x, y ) < d_2 \big( f(x), f(y) \big) ,\] which means that $f(x) \neq f(y)$ due to axiom (M1) in $M_2$. Therefore $f$ is injective.
Note that the converse of the injectivity statement does not always hold.
\end{proof}

Quasi-isometries form an equivalence relation on metric spaces. Let two metric spaces $M_1$ and $M_2$ be related when there \emph{exist} constants $A$, $B$ and $C$ such that $M_1$ is mapped to $M_2$ with an $(A,B,C)$-quasi-isometry.
Then with long but routine reasonings, one can show that this relation satisfies reflexivity, symmetry and transitivity. As a consequence, all finite metric spaces are quasi-isometric to each other, and form one equivalence class under quasi-isometries.
Namely, every finite metric space $M$ is quasi-isometric to the singleton space via the function $f$ that trivially maps every element in $M$ to the singleton space's only element. The constants of this mapping are $A=1$, $B=\diam(M)$ and $C=0$, since $f(x) = f(y)$ for all $x,y\in M$, and hence $d'\big( f(x), f(y) \big) = 0$.
Then we need constants $A$ and $B$ such that \[ \frac{1}{A} \cdot d(x,y) - B \leq 0 \leq A \cdot d(x,y) + B ,\] which can always be satisfied when $A=1$ and $B=\diam(M)$. As for the final constant, $C=0$ because $f$ is surjective (Proposition~\ref{prop:q-iso}).

The term \emph{large-scale geometries} denotes the equivalence classes under quasi-isometries. Under the original context by Gromov, the large-scale geometry of all finite graphs are trivial. Nevertheless, while every finite graph is equivalent to the singleton graph, the quasi-isometric constants may be arbitrarily large. Thus this paper is a refinement of Gromov's idea, as it requires the quasi-isometric constants to be \emph{small}.

\section{Quasi-Isometries via Independent Sets}
\label{sec:q-iso-ind-set}

Quasi-isometric versions of a given graph can be constructed in different ways. This section introduces a construction via maximal independent sets. Another construction based on grouping vertices will be presented later Section~\ref{sec:partition-graphs}.

In a graph $G$, a subset of vertices $S \subset G$ is called an \emph{independent set} when every two vertices in $S$ are not adjacent. Furthermore, $S$ is a \emph{maximal} independent set (MIS) if $S\cup\{v\}$ is no longer an independent set, for every $v\in G\setminus S$.

\begin{definition}[MIS-derived graph]
Let $G$ be a connected graph, and let $S \subset G$ be a maximal independent set in $G$. Then the \emph{MIS-derived graph} $S$ is defined with $S$ as its vertex-set, and $x,y\in S$ are adjacent in $S$ when $d_G(x,y) \leq 3$.
\end{definition}

Before investigating any distance-related properties in the MIS-derived graph $S$, we first need to establish that $S$ is connected, so that the distance-function on $S$ is well defined.

\begin{lemma}
The MIS-derived graph $S$ is connected.
\end{lemma}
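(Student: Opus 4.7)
The plan is to show connectivity of the MIS-derived graph $S$ by walking along a shortest path in $G$ between any two MIS vertices and constructing a sequence of ``shadows'' in $S$ that form a walk in the MIS-derived graph.

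First I would pick arbitrary $u, v \in S$ and fix a shortest path $u = w_0, w_1, \ldots, w_k = v$ in $G$, which exists because $G$ is connected. The key observation is that for every intermediate vertex $w_i$, either $w_i \in S$ itself, or, by maximality of $S$, the set $S \cup \{w_i\}$ fails to be independent, which forces $w_i$ to have at least one neighbor in $S$. In either case I can assign to each $w_i$ a ``shadow'' $s_i \in S$ with $d_G(w_i, s_i) \leq 1$, taking $s_0 = u$ and $s_k = v$ at the endpoints.

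Next I would verify that consecutive shadows are adjacent (or equal) in the MIS-derived graph. By the triangle inequality in $G$,
\[
d_G(s_i, s_{i+1}) \leq d_G(s_i, w_i) + d_G(w_i, w_{i+1}) + d_G(w_{i+1}, s_{i+1}) \leq 1 + 1 + 1 = 3,
\]
so either $s_i = s_{i+1}$, or they are distinct MIS vertices at $G$-distance at most $3$ and therefore adjacent in $S$ by definition. Concatenating these edges yields a walk in $S$ from $u$ to $v$, which reduces to a path, giving connectivity.

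I don't anticipate a major obstacle; the only point that needs a careful sentence is the invocation of maximality (to guarantee that every vertex outside $S$ has a neighbor in $S$), since this is what makes the shadow assignment well defined. Everything else is a direct triangle-inequality calculation.
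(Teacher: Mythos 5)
Your proof is correct and follows essentially the same route as the paper's: walk along a $G$-path between the two MIS vertices, use maximality to find an $S$-neighbor of each non-$S$ vertex, and use the triangle inequality together with the distance-$3$ adjacency rule to chain these into an $S$-walk. Your uniform ``shadow'' assignment $s_i$ with $d_G(w_i,s_i)\leq 1$ is in fact a cleaner packaging than the paper's version, which singles out and repairs only the segments between consecutive on-path $S$-vertices that are more than three apart.
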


\begin{proof}
Let $x,y\in S\subset G$. The original graph $G$ is assumed connected, so there exists at least one simple $G$-path between $x$ and $y$. Now, the goal is to show the existence of an $S$-path between $x$ and $y$.

Let $x = v_0$ and $y = v_k$, and let $\mathbf{v} = v_0, v_1, \ldots v_k$ be a $G$-path between $x$ and $y$. We say that $\mathbf{v}$ corresponds to an $S$-path when there is an $S$-path whose vertices are all on $\mathbf{v}$.

Now, if $\mathbf{v}$ corresponds to an $S$-path, then this $S$-path shows that $x$ and $y$ are connected. If $\mathbf{v}$ does not correspond to an $S$-path, then it contains (potentially multiple) pairs of vertices $v_i$ and $v_j$ that satisfy:
\begin{itemize}
  \item $0\leq i$ and $i\leq j-4$ and $j\leq k$.
  \item $v_i, v_j \in S$.
  \item for all $i<p<j: v_p\notin S$.
\end{itemize}
Figure~\ref{fig:mis-special-case} illustrates the general form of these pairs.
Here, $v_i$ and $v_j$ are not adjacent in $S$ because $d_G (v_i, v_j) > 3$. Moreover, even if they are connected, any intermediate vertex cannot be on $\mathbf{v}$ because every vertex $v_p$ between them is not in $S$.

\begin{figure}[t]
  \centering
  \includegraphics[scale=0.5]{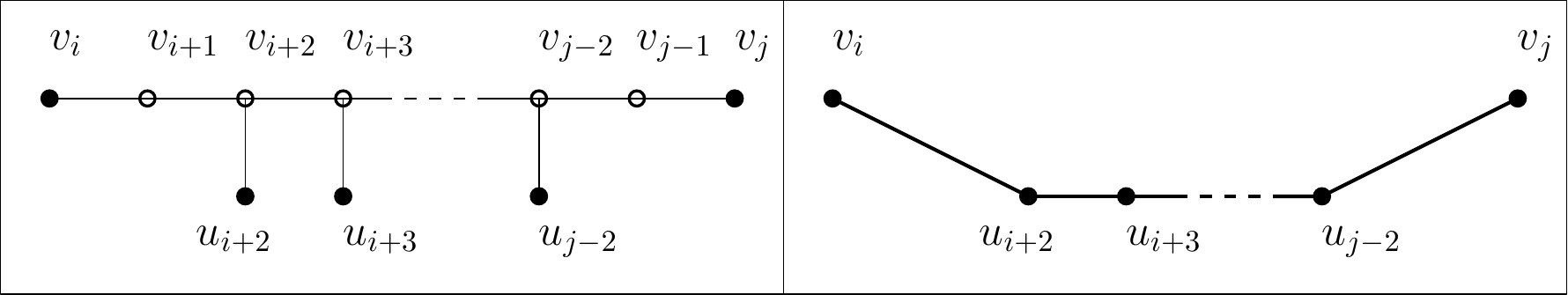}
  \caption{Left: Part of a graph $G$, shaded vertices are in the MIS $V(S)$, the $G$-path between $v_i$ and $v_j$ does not correspond to an $S$-path. Right: Part of the MIS-derived graph, the $S$-path not on the original $G$-path.}
  \label{fig:mis-special-case}
\end{figure}

Consider every $v_p$ with $i+2 \leq p \leq j-2$. Given $v_p\notin S$ and that $S$ is an MIS, $v_p$ must have a neighbor in $S$. The two neighbors of $v_p$ on $\mathbf{v}$ are not in $S$, so $v_p$ must have a neighbor $u_p$ that is in $S$ and not on $\mathbf{v}$. Then, \[ v_i \sim u_{i+2} \sim u_{i+3} \sim \cdots \sim u_{j-2} \sim v_j \text{~in~} S,\] which shows that $v_i$ and $v_j$ are connected via an $S$-path that does not correspond to $\mathbf{v}$. Applying this construction to every pair of such $v_i$ and $v_j$, we obtain an $S$-path that does not completely correspond to $\mathbf{v}$ but still connects $x$ and $y$.
\end{proof}

\begin{definition}[Mapping to MIS-derived graph]\label{defn:mis-f}
Let $f:G\rightarrow S$ such that for all $v\in V(G)$:
\begin{enumerate}
  \item if $v\in V(S)$, then $f(v)$ is simply $v$;
  \item otherwise, $f(v)$ is set to be any $w$ such that $w\in V(S)$ and $w \sim_G v$.
\end{enumerate}
Since $V(S)$ is a maximal independent set, every vertex not in $V(S)$ must have a neighbor in $V(S)$, so the second case is well defined.
\end{definition}

\begin{theorem}\label{thm:ind-set-ineq}
For all $x,y\in V(G)$,
\[      \left\lfloor \frac{d_G\big( x, y \big)}{3} \right\rfloor + 1
   \leq d_S\big( f(x), f(y) \big)
   \leq d_G\big( x, y \big) .
\]
\end{theorem}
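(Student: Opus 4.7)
The plan is to prove the two inequalities separately, both driven by the observation that the map $f$ displaces every vertex by at most one $G$-step: if $v \in V(S)$ then $f(v) = v$, while if $v \notin V(S)$ then $f(v)$ is a $G$-neighbour of $v$, which exists by the maximality of the independent set.

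For the upper bound $d_S(f(x), f(y)) \leq d_G(x,y)$, I would take a shortest $G$-path $x = v_0, v_1, \ldots, v_d = y$ with $d = d_G(x,y)$, and push it through $f$ to obtain the sequence $f(v_0), f(v_1), \ldots, f(v_d)$ in $V(S)$. The triangle inequality in $G$ then gives
\[ d_G\big(f(v_i), f(v_{i+1})\big) \leq d_G(f(v_i), v_i) + d_G(v_i, v_{i+1}) + d_G(v_{i+1}, f(v_{i+1})) \leq 1 + 1 + 1 = 3, \]
so by the $S$-adjacency rule consecutive images are either equal or adjacent in $S$. Deleting duplicates from this sequence yields an $S$-walk from $f(x)$ to $f(y)$ of length at most $d$, and hence $d_S(f(x), f(y)) \leq d_G(x,y)$.

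For the lower bound, I would run the argument in reverse: pick a shortest $S$-path $f(x) = s_0, s_1, \ldots, s_k = f(y)$ realising $k = d_S(f(x), f(y))$. Each $S$-edge $s_i s_{i+1}$ carries the guarantee $d_G(s_i, s_{i+1}) \leq 3$, so chaining across the path gives $d_G(f(x), f(y)) \leq 3k$. One final triangle inequality in $G$ sandwiches
\[ d_G(x,y) \leq d_G(x, f(x)) + d_G(f(x), f(y)) + d_G(f(y), y) \leq 1 + 3k + 1 = 3k + 2, \]
and rearranging yields a floor-expression lower bound on $k$ in terms of $d_G(x,y)$.

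The main obstacle I foresee is extracting the exact form $\lfloor d_G(x,y)/3 \rfloor + 1$ rather than the slightly weaker $\lfloor d_G(x,y)/3 \rfloor$ that the naive sandwich above produces, and reconciling the boundary case $x = y$ (where $d_S(f(x),f(y)) = 0$ but $\lfloor 0/3 \rfloor + 1 = 1$). Tightening the sandwich probably requires a case split on whether $x$ and $y$ themselves lie in $V(S)$, since being in the MIS removes the corresponding $+1$ of slack at the endpoints; alternatively the theorem may implicitly require $x \neq y$, or the shortest $S$-path may admit a sharper per-edge bound that shaves off the remaining unit of additive distortion.
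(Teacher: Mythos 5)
Your sandwich strategy is essentially the paper's: bound each $S$-edge by three $G$-edges and absorb the unit displacement $d_G(v,f(v))\leq 1$ at the endpoints. Your upper-bound argument (pushing a shortest $G$-path through $f$ and noting that consecutive images are equal or $S$-adjacent because their $G$-distance is at most $3$) is correct, and in fact cleaner than the paper's, which detours through the claim $d_S(f(x),f(y)) \leq d_G(f(x),f(y)) - 2$ --- an inequality that already fails when $d_G(f(x),f(y))=2$ or when $f(x)=f(y)$.

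The obstacle you flag at the end is not a defect of your argument but of the theorem: the stated lower bound is false. Beyond the degenerate case $x=y$ that you noticed, take $G=P_4$ with $V(S)=\{v_1,v_4\}$ (a maximal independent set); then $d_G(v_1,v_4)=3$, so $v_1$ and $v_4$ are adjacent in $S$ and $d_S(f(v_1),f(v_4))=1$, whereas the claimed lower bound is $\lfloor 3/3\rfloor+1=2$. The paper reaches the stated bound only via inequality (\ref{eq:mis-lower}), which overstates the correct quantity $\lceil d_G(f(x),f(y))/3\rceil$ as $\lfloor d_G(f(x),f(y))/3\rfloor+1$ (these differ precisely when $3$ divides the distance), and via a ``routine rearranging'' that substitutes the possibly larger $d_G(x,y)$ for $d_G(f(x),f(y))$ on the wrong side of a lower bound. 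What your chain $d_G(x,y)\leq 3k+2$ actually proves is $d_S(f(x),f(y)) \geq \lceil (d_G(x,y)-2)/3\rceil$, which is correct, is tight on the $P_4$ example, handles $x=y$, and still yields the $(3,1,0)$-quasi-isometry of Corollary~\ref{cor:ind-set-q-iso}. Do not try to close the remaining gap to the stated bound; it cannot be closed, and the theorem's lower bound should be weakened accordingly.
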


\begin{proof}
Each of $x$ and $y$ may or may not belong to $S$, but by Definition~\ref{defn:mis-f}, $d_G \big( x, f(x) \big) \leq 1$ and $d_G \big( y, f(y) \big) \leq 1$. Then with the triangle inequality, we obtain
\begin{equation}\label{eq:mis-xy-f}
  d_G \big( x, y \big) - 2 \leq d_G \big( f(x), f(y) \big) \leq d_G \big( x, y \big) +2 .
\end{equation}
Given a shortest $G$-path between $f(x)$ and $f(y)$, the longest possible $S$-path has the form in Figure~\ref{fig:mis-special-case}, so
\begin{equation}\label{eq:mis-upper}
 d_S \big( f(x), f(y) \big) \leq d_G \big( f(x), f(y) \big) - 2 .
\end{equation}
As for the shortest possible $S$-path, this is obtained when the $G$-path between $f(x)$ and $f(y)$ is decomposed into as many three-edge segments as possible. Hence 
\begin{equation}\label{eq:mis-lower}
  \left\lfloor \frac{d_G \big( f(x), f(y) \big)}{3} \right\rfloor + 1 \leq d_S \big( f(x), f(y) \big) .
\end{equation}
Combining (\ref{eq:mis-xy-f}), (\ref{eq:mis-upper}) and (\ref{eq:mis-lower}) with some routine rearranging yields the theorem's statement.
\end{proof}

\begin{corollary}\label{cor:ind-set-q-iso}
The mapping $f$ is a $(3,1,0)$-quasi-isometry.
\end{corollary}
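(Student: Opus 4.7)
The plan is to deduce the corollary directly from Theorem~\ref{thm:ind-set-ineq} together with Proposition~\ref{prop:q-iso}, essentially by massaging the floor-based bound into the standard form of the quasi-isometric inequality (Q1) and observing that the density constant $C=0$ is free.

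First I would verify the upper half of (Q1). Theorem~\ref{thm:ind-set-ineq} already gives $d_S(f(x),f(y)) \leq d_G(x,y)$ for all $x,y\in V(G)$, which is strictly tighter than the required bound $d_S(f(x),f(y)) \leq 3\cdot d_G(x,y) + 1$, so this side is immediate.

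Next I would handle the lower half of (Q1). The theorem supplies
\[ d_S\big(f(x),f(y)\big) \;\geq\; \left\lfloor \tfrac{d_G(x,y)}{3} \right\rfloor + 1 . \]
Using the elementary inequality $\lfloor n/3 \rfloor > n/3 - 1$ for any non-negative integer $n$, I would conclude that $d_S(f(x),f(y)) > d_G(x,y)/3$, and in particular $d_S(f(x),f(y)) \geq \tfrac{1}{3}\cdot d_G(x,y) - 1$, which is exactly the lower bound of (Q1) with $A=3$ and $B=1$.

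Finally, I would dispose of (Q2) with $C=0$. By clause~(1) of Definition~\ref{defn:mis-f}, every $v\in V(S)$ satisfies $f(v)=v$, so $f$ is surjective onto $V(S)$, and then part~2 of Proposition~\ref{prop:q-iso} yields $C=0$. The only mildly subtle point is keeping the floor-to-fraction conversion honest, but since we only need $B=1$ rather than the tightest possible additive constant, there is no real obstacle; the whole argument is a short routine wrap-up of Theorem~\ref{thm:ind-set-ineq}.
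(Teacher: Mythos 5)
Your proposal is correct and follows essentially the same route as the paper: relax the two sides of Theorem~\ref{thm:ind-set-ineq} into the form of (Q1) with $A=3$, $B=1$, and invoke surjectivity plus Proposition~\ref{prop:q-iso} to get $C=0$. The only difference is that you spell out the floor-to-fraction step $\lfloor n/3\rfloor + 1 \geq n/3 > n/3 - 1$ explicitly, which the paper leaves implicit in the word ``relaxed.''
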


\begin{proof}
The inequality in Theorem~\ref{thm:ind-set-ineq} can be relaxed into
\[ \frac{1}{3} \cdot d_G\big( x, y \big) - 1
   \leq d_S\big( f(x), f(y) \big) \leq
   3 \cdot d_G\big( x, y \big) + 1 , \]
so the multiplicative stretch is 3 while the additive distortion is 1.
Finally, the third constant is zero because $f$ is surjective (Proposition~\ref{prop:q-iso}).
\end{proof}

Since quasi-isometries form an equivalence relation, Corollary~\ref{cor:ind-set-q-iso} implies that all the graphs derived from the maximal independent sets form one equivalence class.

A maximal independent set can have as many as $n-1$ vertices, which is witnessed in the case of the star-graph. Moreover, a star-graph is acyclic, but the MIS made up of the non-center vertices induces a complete graph. These two observations show that the construction of MIS-derived graphs does not always satisfy the \emph{compression} and \emph{retention} properties listed in Section~\ref{sec:setup}, and therefore is not a good simplification method.

\section{Quasi-Isometries via Partition-Graphs}\label{sec:partition-graphs}

This section introduces another construction of quasi-isometric graph simplification. This method, based on grouping vertices, will be the focus of the rest of this paper. 

\begin{definition}\label{def:partition}
A \emph{partition} of a graph $G$ is a partition of $V(G)$ into subsets that induce connected subgraphs. These subsets are called \emph{super-vertices}.
\end{definition}

Note that the word ``partition'' here is slightly different from the set-theoretic use of the same word, as we additionally require the super-vertices to induce connected subgraphs.

\begin{definition}[Partition-graph]\label{def:partition-graph}
Given a partition $\mathcal P$ on $G$, the \emph{partition-graph} $G(\mathcal{P})$ is defined as follows.
\begin{itemize}
  \item The vertices of $G(\mathcal{P})$ are the connected subsets in the partition $\mathcal{P}$.
  \item Two super-vertices $P_i$ and $P_j$ are adjacent via a \emph{super-edge} in $G(\mathcal{P})$ if and only if there exist $x\in P_i$ and $y\in P_j$ such that $x\sim y$ in $G$.
\end{itemize}

Sometimes $\ol{G}$ is written instead of $G(\mathcal P)$ when $\mathcal P$ is clear from the context. For any $v\in G$, $\ol{v}$ denotes the super-vertex in $\ol{G}$ that contains $v$. This leads to the natural mapping $\varphi:G \rightarrow \ol{G}$ with $\varphi(v)=\ol{v}$.
\end{definition}

\begin{proposition}\label{prop:pgrf-path-leq}
Let $\mathcal{P}$ be any partition on a graph $G$, and let $H$ denote $G(P)$. Then for every pair $x,y\in G$ satisfies $d_H(\ol{x}, \ol{y}) \leq d_G(x,y)$.
\end{proposition}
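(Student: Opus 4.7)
The plan is to start from a shortest $G$-path between $x$ and $y$ and push it down to a walk in $H$ of the same or shorter length. Concretely, I would let $v_0=x, v_1, \ldots, v_k=y$ be a shortest path in $G$, so that $k = d_G(x,y)$, and then look at the induced sequence of super-vertices $\ol{v_0}, \ol{v_1}, \ldots, \ol{v_k}$. The core observation is that along this sequence, each step is either (a) trivial, when $\ol{v_i}=\ol{v_{i+1}}$, or (b) a super-edge in $H$, when $\ol{v_i}\neq\ol{v_{i+1}}$. Case (b) follows immediately from Definition~\ref{def:partition-graph}, because $v_i\sim v_{i+1}$ in $G$ witnesses the adjacency of the two distinct super-vertices in $H$.

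Next I would collapse the consecutive duplicates in the sequence $\ol{v_0},\ldots,\ol{v_k}$. What remains is a sequence of pairwise distinct-consecutive super-vertices starting at $\ol{x}$ and ending at $\ol{y}$, with each consecutive pair joined by a super-edge in $H$. This is a walk in $H$ from $\ol{x}$ to $\ol{y}$, and its length is the number of indices $i$ with $\ol{v_i}\neq\ol{v_{i+1}}$, which is at most $k$. Since every walk between two vertices can be shortened to a path between them of length no greater, I conclude
\[ d_H(\ol{x},\ol{y}) \leq k = d_G(x,y). \]

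There is no real obstacle here; the argument is essentially the observation that the natural map $\varphi$ is non-expanding on edges, and that edge-non-expansion implies distance-non-expansion on connected graphs. The only mild subtlety worth spelling out is why $\ol{x}$ and $\ol{y}$ are connected in $H$ at all when they differ — this is handled by the same path-pushdown, so the proposition and the connectedness of $H$ can be stated together if desired. I would keep the proof short: state the path, state the two cases per step, and conclude.
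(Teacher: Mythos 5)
Your proposal is correct and follows essentially the same route as the paper's proof: push the shortest $G$-path down to the sequence of super-vertices, note that each step is either a repeat or a super-edge, and shorten the resulting walk in $H$. Your version just spells out the two cases per step more explicitly than the paper does.
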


\begin{proof}
Let $d_G(x,y) = k$, $x = v_0$ and $y = v_k$. Also let $\mathbf{v} = v_0, v_1, \dots v_k$ be a shortest simple path in $G$. Consider the path $\ol{\mathbf{v}} = \ol{v}_0, \ol{v}_1, \dots \ol{v}_k$ in $H$. The path $\ol{\mathbf{v}}$ in $H$ may not be a simple path; it has possible repeats because the mapping from $v$ to $\ol{v}$ is not injective in general. Hence, when $\ol{\mathbf{v}}$ is shortened to a simple path in $H$, its length $\leq k$.
\end{proof}

A \emph{simple cycle} $v_1,\ldots v_k, v_1$ is made up of a simple path $v_1,v_2,\ldots v_k$ and an edge $v_k \sim v_1$. Now the reasoning on paths in Proposition~\ref{prop:pgrf-path-leq} can be straightforwardly applied to cycles to obtain Corollary~\ref{cor:pgrf-cycle-leq}.

\begin{corollary}\label{cor:pgrf-cycle-leq}
The longest simple cycle in $G$ has length $k$ if and only if the longest simple cycle in $G(\mathcal P)$ has length $\leq k$.
\end{corollary}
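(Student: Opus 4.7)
The plan is to adapt Proposition~\ref{prop:pgrf-path-leq}'s path argument to cycles, but in the lifting direction: given any simple cycle in $G(\mathcal P)$ of length $\ell$, I will build a simple cycle in $G$ of length at least $\ell$. This yields the bound, since it forces the longest simple cycle of $G(\mathcal P)$ to be no longer than that of $G$.

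First I would fix a simple cycle $\ol{v}_1, \ol{v}_2, \ldots, \ol{v}_\ell, \ol{v}_1$ in $G(\mathcal P)$. For each consecutive pair $(\ol{v}_i, \ol{v}_{i+1})$ with indices taken modulo $\ell$, Definition~\ref{def:partition-graph} guarantees the existence of a $G$-edge joining some $x_i \in \ol{v}_i$ to some $y_{i+1} \in \ol{v}_{i+1}$. This yields $\ell$ bridging edges together with, inside each super-vertex $\ol{v}_i$, two (possibly coinciding) distinguished vertices $y_i$ and $x_i$.

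Next, since each super-vertex induces a connected subgraph of $G$ by Definition~\ref{def:partition}, I would choose for every $i$ a simple path $\pi_i$ inside $\ol{v}_i$ that connects $y_i$ to $x_i$ (of length zero when $y_i = x_i$). Concatenating the $\pi_i$'s with the $\ell$ bridging edges produces a closed walk $W$ in $G$ whose edge count is at least $\ell$ (the $\ell$ bridging edges alone).

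Finally, I would verify that $W$ is a genuine simple cycle. The super-vertices of $\mathcal P$ are pairwise disjoint, so no vertex of $G$ can appear in two different $\pi_i$'s; and each $\pi_i$ is internally simple by construction. Hence $W$ visits each of its vertices exactly once, and therefore exhibits a simple cycle in $G$ of length $\geq \ell$. The main technical point I expect to watch is the degenerate situation in which $y_i = x_i$, where $\pi_i$ collapses to a single vertex and two adjacent bridging edges meet there; this case is benign but must be handled explicitly to confirm that the assembled object is really a cycle rather than a walk that revisits itself.
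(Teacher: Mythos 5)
Your proof is correct, but it runs in the opposite direction from the paper's own (one-sentence) justification. The paper reuses the reasoning of Proposition~\ref{prop:pgrf-path-leq} in the \emph{pushing-down} direction: a simple cycle in $G$ is a simple path plus a closing edge, and its image under $\varphi$ is a closed walk in $G(\mathcal P)$ of length at most $k$. You instead \emph{lift}: every simple cycle of length $\ell$ in $G(\mathcal P)$ is expanded, via one bridging $G$-edge per super-edge together with a simple connecting path inside each (connected, pairwise disjoint) super-vertex, into a simple cycle of length at least $\ell$ in $G$. Your direction is the one actually needed both for the stated conclusion (bounding the cycles of $G(\mathcal P)$ by those of $G$) and for Corollary~\ref{cor:pgrf-tree-retention}: knowing that images of $G$-cycles are short closed walks in $G(\mathcal P)$ does not by itself exclude long cycles in $G(\mathcal P)$, since not every cycle there need arise as such an image, whereas your lifting argument closes exactly that gap. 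Two minor points worth making explicit: a simple cycle in $G(\mathcal P)$ has $\ell \geq 3$, which guarantees that the $\ell$ bridging edges are pairwise distinct (they join distinct pairs of super-vertices) and that the lifted object is a genuine cycle even when every $\pi_i$ is trivial; and the corollary's literal ``if and only if'' cannot hold as stated (the right-hand condition is monotone in $k$ while the left-hand one is not), so the single inequality you establish is the correct content to prove.
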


As trees has no simple cycle, Corollary~\ref{cor:pgrf-cycle-leq} leads to to Corollary~\ref{cor:pgrf-tree-retention}, which shows that the construction of partition-graphs satisfies the \emph{retention} property (Section~\ref{sec:setup}) on trees.

\begin{corollary}\label{cor:pgrf-tree-retention}
If $G$ is a tree, then $G(\mathcal P)$ is also a tree for any partition $\mathcal P$.
\end{corollary}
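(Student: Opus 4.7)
The plan is to recall that a tree is by definition a connected acyclic graph, and then to establish these two properties of $G(\mathcal{P})$ separately using the results already built up in this section.

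For connectedness, I would argue as follows. Take any two super-vertices $\overline{x}, \overline{y} \in V(G(\mathcal{P}))$, with representatives $x,y\in V(G)$. Since $G$ is a tree (in particular connected), $d_G(x,y)$ is finite, and by Proposition~\ref{prop:pgrf-path-leq} we have $d_{G(\mathcal{P})}(\overline{x}, \overline{y}) \leq d_G(x,y) < \infty$. Hence every pair of super-vertices is joined by a path in $G(\mathcal{P})$, so $G(\mathcal{P})$ is connected.

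For acyclicity, I would appeal to Corollary~\ref{cor:pgrf-cycle-leq}. Since $G$ is a tree, $G$ contains no simple cycle at all (equivalently, the length of a longest simple cycle in $G$ is zero). By the corollary, any simple cycle in $G(\mathcal{P})$ would force a simple cycle in $G$ of at least that length, contradicting the tree assumption. Thus $G(\mathcal{P})$ has no simple cycle, and combined with connectedness we conclude that $G(\mathcal{P})$ is a tree.

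The main subtlety, rather than a real obstacle, is making sure Corollary~\ref{cor:pgrf-cycle-leq} is being applied in the degenerate ``no cycle'' regime, since the corollary is phrased using a parameter $k$ bounding the longest cycle length. Reading it as the contrapositive, ``a simple cycle in $G(\mathcal{P})$ forces a simple cycle in $G$,'' removes any ambiguity, and this is the form I would invoke in the write-up. Everything else is immediate from the previously established propositions, so the argument should be just a few lines.
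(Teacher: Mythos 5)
Your proof is correct and follows essentially the same route as the paper, which derives the corollary directly from Corollary~\ref{cor:pgrf-cycle-leq} applied in the acyclic case (the paper leaves connectedness implicit under its standing assumption that all graphs are connected, whereas you verify it explicitly via Proposition~\ref{prop:pgrf-path-leq}). Your remark about reading Corollary~\ref{cor:pgrf-cycle-leq} in contrapositive form in the degenerate ``no cycle'' regime is exactly the reading the paper intends.
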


Next, in order for $\varphi$ to be a quasi-isometry, the partition needs to have an upper bound on the diameters of all its super-vertices, as stated in Definition~\ref{defn:sharp-partition}. This is akin to chopping the original graph into bits that are small and ``sharp''.

\begin{definition}[Sharp partition]\label{defn:sharp-partition}
Given a natural number $c$, a partition is called a \emph{$c$-sharp partition} when every super-vertex $\ol{v}$ satisfies $\diam(\ol{v}) \leq c$. 
\end{definition}

\begin{theorem}\label{thm:sharp-partition-q-iso}
If $\mathcal{P}$ is a $c$-sharp partition on $G$, then the natural mapping $\varphi$ from $G$ to $H=G(\mathcal{P})$ is a $(c+1,1,0)$-quasi-isometry.
\end{theorem}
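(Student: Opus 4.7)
The plan is to verify the two properties (Q1) and (Q2) of Definition~\ref{def:q-iso} separately. Property (Q2) with $C=0$ is free: every super-vertex $\ol{v}\in H$ is the image under $\varphi$ of any of its constituent vertices in $G$, so $\varphi$ is surjective, and Proposition~\ref{prop:q-iso}(2) immediately gives the density condition with constant $0$.

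For (Q1), the upper bound is essentially handed to us by Proposition~\ref{prop:pgrf-path-leq}, which supplies $d_H(\ol{x},\ol{y}) \leq d_G(x,y)$; this can then be weakened to $d_H(\ol{x},\ol{y}) \leq (c+1)\,d_G(x,y) + 1$ to match the template of (Q1). The real work is the lower bound $\frac{1}{c+1}\,d_G(x,y) - 1 \leq d_H(\ol{x},\ol{y})$. My strategy is to take a shortest $H$-path and lift it to a (not necessarily shortest) $G$-walk from $x$ to $y$, then bound its length in terms of $k := d_H(\ol{x},\ol{y})$.

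Concretely, fix a shortest $H$-path $\ol{x} = \ol{v}_0,\ol{v}_1,\ldots,\ol{v}_k = \ol{y}$. Definition~\ref{def:partition-graph} supplies, for each $0 \leq i < k$, a witnessing $G$-edge $a_i b_i$ with $a_i \in \ol{v}_i$ and $b_i \in \ol{v}_{i+1}$. Adopting the conventions $b_{-1} := x$ and $a_k := y$, I would form a $G$-walk from $x$ to $y$ by concatenating the $k+1$ intra-super-vertex segments from $b_{i-1}$ to $a_i$ inside $\ol{v}_i$ (for $0 \leq i \leq k$) with the $k$ witnessing edges $a_i b_i$. Because $\mathcal{P}$ is $c$-sharp, each intra-segment has length at most $c$, so $d_G(x,y) \leq (k+1)c + k = (c+1)k + c$. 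Rearranging gives $\frac{d_G(x,y) - c}{c+1} \leq k$, and since $\frac{c}{c+1} \leq 1$ this yields $\frac{d_G(x,y)}{c+1} - 1 \leq k = d_H(\ol{x},\ol{y})$, as required. The boundary case $k = 0$ (where $\ol{x} = \ol{y}$ and the walk degenerates to a single intra-segment of length at most $c$) is absorbed automatically by these conventions.

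The only subtlety, and the step I would be most careful about, is the bookkeeping of the lifted walk: the endpoint conventions $b_{-1} = x$ and $a_k = y$ need to be set so that the sharpness bound is applied in exactly $k+1$ super-vertices (not $k$ or $k+2$), and so that the $k=0$ case is not inadvertently excluded. Once that is pinned down, the rest is straightforward algebra with no hidden cases.
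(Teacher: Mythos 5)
Your proposal is correct and follows essentially the same route as the paper's own proof: the upper bound via Proposition~\ref{prop:pgrf-path-leq}, the lower bound by lifting a shortest $H$-path to a $G$-walk of length at most $(c+1)\,d_H(\ol{x},\ol{y})+c$, and surjectivity for the third constant. The paper merely states the edge count without the explicit $a_i,b_i$ bookkeeping that you spell out, so your version is a more detailed rendering of the identical argument.
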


\begin{proof}
Let $x,y\in G$. Any shortest $H$-path between $\ol{x}$ and $\ol{y}$ corresponds to at most $(c+1) \cdot d_H(\ol{x},\ol{y}) + c$ many $G$-edges between $x$ and $y$, so $d_G(x,y) \leq (c+1) \cdot d_H(\ol{x},\ol{y}) + c$. This can be rearranged into
\begin{equation}\label{eq:pgrf-q-iso-lower}
       \frac{1}{c+1} \cdot d_G(x,y) - 1
  ~\leq~ \frac{1}{c+1} \cdot d_G(x,y) - \frac{c}{c+1} 
  ~\leq~ d_H(\ol{x},\ol{y}) .
\end{equation}
On the other hand, Proposition~\ref{prop:pgrf-path-leq} established $d_H(\ol{x}, \ol{y}) \leq d_G(x,y)$, which can then be relaxed to
\begin{equation}\label{eq:pgrf-q-iso-upper}
  d_H(\ol{x}, \ol{y}) ~\leq~ d_G(x,y) ~\leq~ (c+1)\cdot d_G(x,y) + 1 .
\end{equation}
Combining (\ref{eq:pgrf-q-iso-lower}) and (\ref{eq:pgrf-q-iso-upper}), we can conclude that the stretch factor is $c+1$ and the additive distortion is $1$. Meanwhile, the third constant is zero because $\varphi$ is surjective. Therefore, $\varphi$ is a $(c+1,1,0)$-quasi-isometry.
\end{proof}

The sharpness of a partition ensures small quasi-isometry constants, the first goal in Section~\ref{sec:setup}. However, when the sharpness is zero, the partition-graph is exactly identical to the original graph, and does not achieve any meaningful simplification. In order to satisfy the \emph{compression} property, we need the concept of a partition's coarseness (Definition~\ref{defn:coarse-partition}) in addition to the sharpness.

\begin{definition}[Coarse partition]\label{defn:coarse-partition}
For a natural number $b\geq 0$, a partition $\mathcal P$ of a graph $G$ is called a \emph{$b$-coarse partition} when every super-vertex $\ol{v}$ satisfies $\diam(\ol{v}) \geq b$. 
\end{definition}

If a super-vertex's diameter is at least $b$, then it must contain at least $b+1$ vertices. Hence, if $\mathcal P$ is a $b$-coarse partition, then \[ |G(\mathcal P)| \leq \frac{1}{b+1} \cdot |G| .\]
In conclusion, a small sharpness value ensures small quasi-isometry constants, and implies good distance-approximation. On the other hand, a large coarseness value ensures sufficient \emph{compression}. These respectively correspond to the first two goals listed in Section~\ref{sec:setup}, so a good partition must achieve a balance between these two antithetical parameters.

\begin{example}
There are different ways of constructing partitions. Here we present a simple construction.

On any input graph, we let every vertex be \emph{unassigned} initially. Then, choose any unassigned vertex $v$, and assign $v$ and its unassigned neighbors to a new super-vertex. Repeat this process until no unassigned vertex remains. 

Now, since the diameter of every collapsed neighborhood is at most two, the resulting partition is 2-sharp.
However, this can produce super-vertices that contain only one vertex, and potentially result in a 0-coarse partition, which does not satisfy the \emph{compression} property from Section~\ref{sec:setup}.

To remedy this, we can make a modification. Define an unassigned vertex to be \emph{completely free} when all of its neighbors are unassigned. Then the modified method runs as follows:
\begin{enumerate}
  \item While there exists some completely free vertex, choose any completely free vertex $v$, and assign $v$ and its unassigned neighbors to a new super-vertex.
  \item Then we reach a stage where all of the remaining unassigned vertices are not completely free. For each unassigned vertex $w$, it must have an assigned neighbor. Hence, we choose any assigned neighbor $u$, and place $w$ into the super-vertex containing $u$.
\end{enumerate}
The resulting partition is 4-sharp and 2-coarse. This means that the \emph{compression} property is satisfied, while the quasi-isometry constants are still small. Therefore, this modified method achieves the goals better.
\end{example}

\section{Center-Shift}\label{sec:center-shift}

This section studies the \emph{preservation} property of the graph center (Definition~\ref{defn:center}) under partition-graphs. Suppose $G$ is a large-scale graph whose center $C_G$ is impractical to compute. Then one wants to simplify $G$ to a smaller graph $H$ with a surjective mapping $\varphi : G \rightarrow H$, locate the center of $H$ (denoted $C_H$), and then infer the center $C_G$ using $\varphi$ and $C_H$. A natural way to infer $C_G$ is to use the reverse image: $\varphi^{-1}(C_H) = \{ v \in G \mid \varphi(v) \in C_H \}$, which is well defined when $\varphi$ is surjective.

The set of vertices $\varphi^{-1}(C_H)$ does not necessarily equal the original center $C_G$. Therefore, a metric is needed to measure how far apart $\varphi^{-1}(C_H)$ and $C_G$ are. For this, we introduce the \emph{center-shift} in Definition~\ref{defn:center-shift}. Since $H$ is the simplified and ``coarser'' graph, defining the center-shift in terms of the distance in $G$ seems more accurate and reasonable.

\begin{definition}[Center-shift]\label{defn:center-shift}
The \emph{center-shift} of a surjective mapping $\varphi:G\rightarrow H$ is defined to be $d_G \big( C_G , \varphi^{-1}(C_H) \big)$.
\end{definition}

Before investigating any possible relationship between quasi-isometry and the center-shift, we first present Lemma~\ref{lma:q-iso-ecc}, which shows that the form of the quasi-isometric inequality applies not only to the distance-function but also to the eccentricity.

\begin{lemma}\label{lma:q-iso-ecc}
Let $\varphi:G\rightarrow H$ be an $(A,B,C)$-quasi-isometry. Then for all  $v\in G$:
\[ \frac{1}{A} \cdot \ecc_G(v) - B
   \leq \ecc_H \big( \varphi(v) \big)
   \leq A \cdot \ecc_G(v) + B. \]
\end{lemma}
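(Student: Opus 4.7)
The plan is to prove the two inequalities separately, in each case by choosing an appropriate eccentricity-witness and then applying (Q1) once. The structure mirrors how Proposition~\ref{prop:ecc-diff} was proved: we pick the witness that lies on the correct side of the inequality and use the triangle-like bound provided by the quasi-isometric inequality.

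For the lower bound, I would let $x^\ast \in G$ be an ecc-wit of $v$, so that $\ecc_G(v) = d_G(v, x^\ast)$. Applying the left half of (Q1) to the pair $v, x^\ast$ gives
\[ \frac{1}{A}\cdot \ecc_G(v) - B \;=\; \frac{1}{A}\cdot d_G(v, x^\ast) - B \;\leq\; d_H\!\bigl(\varphi(v), \varphi(x^\ast)\bigr). \]
Since $\varphi(x^\ast)$ is a vertex of $H$ and eccentricity is the maximum distance to any such vertex, $d_H(\varphi(v), \varphi(x^\ast)) \leq \ecc_H(\varphi(v))$, and chaining these two inequalities gives the lower bound.

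For the upper bound, I would go the other way: let $y^\ast \in H$ be an ecc-wit of $\varphi(v)$, so $\ecc_H(\varphi(v)) = d_H(\varphi(v), y^\ast)$. In the graph-simplification setting the mapping $\varphi$ is surjective (as emphasised in Section~\ref{sec:q-iso} and encoded by $C=0$ in Proposition~\ref{prop:q-iso}), so there exists $x^\ast \in G$ with $\varphi(x^\ast) = y^\ast$. Applying the right half of (Q1) to $v, x^\ast$ then yields
\[ \ecc_H\!\bigl(\varphi(v)\bigr) \;=\; d_H\!\bigl(\varphi(v), \varphi(x^\ast)\bigr) \;\leq\; A\cdot d_G(v, x^\ast) + B \;\leq\; A\cdot \ecc_G(v) + B, \]
where the last step uses $d_G(v, x^\ast) \leq \ecc_G(v)$ by definition of eccentricity.

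The main subtlety, rather than any hard step, is the handling of the upper bound when $\varphi$ may fail to be surjective: in that case the ecc-witness $y^\ast$ need not be in the image of $\varphi$, and one must instead invoke the density property (Q2) to pick $x^\ast$ with $d_H(y^\ast, \varphi(x^\ast)) \leq C$, which would cost an extra additive $C$ on the right-hand side. The stated lemma omits this term, which is consistent with the paper's standing assumption that simplification maps are surjective; I would either make that assumption explicit at the start of the proof or, alternatively, weaken the statement to carry a $+C$ term. Either way, no additional ideas beyond (Q1), (Q2), and the definition of eccentricity are needed.
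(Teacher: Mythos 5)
Your proof is correct and follows essentially the same route as the paper's: the lower bound via an ecc-witness of $v$ in $G$ plus the left half of (Q1), and the upper bound via an ecc-witness of $\varphi(v)$ in $H$ pulled back through $\varphi$ plus the right half of (Q1). Your closing remark about non-surjective $\varphi$ requiring an extra $+C$ (or $+AC$) term is a legitimate caveat that the paper's own proof silently glosses over by assuming the ecc-witness of $\varphi(v)$ lies in the image of $\varphi$.
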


\begin{proof}
First, let $v'\in G$ be an ecc-wit of $v$. Then as $\ecc_G(v) = d_G(v,v')$,
\begin{equation}\label{eq:q-iso-ecc-1}
   \frac{1}{A} \cdot \ecc_G(v) - B = \frac{1}{A} \cdot d_G(v,v') - B
   \leq d_H \big( \varphi(v), \varphi(v') \big) \leq \ecc_H \big( \varphi(v) \big) .
\end{equation}
On the other hand, let $u\in G$ such that $\varphi(u)$ is an ecc-wit of $\varphi(v)$. Then
\begin{equation}\label{eq:q-iso-ecc-2}
   \ecc_H \big( \varphi(v) \big)= d_H \big( \varphi(v), \varphi(u) \big) \leq A \cdot d_G (v,u) - B \leq A \cdot \ecc_G (v) - B .
\end{equation}
And combining (\ref{eq:q-iso-ecc-1}) and (\ref{eq:q-iso-ecc-2}) completes the proof.
\end{proof}

It turns out that Definition~\ref{def:uni-ecc} is required in order to derive an upper bound on the center-shift, starting from just a quasi-isometry.

\begin{definition}[Uni-ecc]\label{def:uni-ecc}
Let $G$ be a graph with center $C_G$. Then $G$ is said to have the \emph{uniform eccentricity (uni-ecc)} property when $d_G(C_G,v) = \ecc(v) - \rad(G)$ for all $v\in G$. 
\end{definition}

The uni-ecc property is in fact a strong property that is not satisfied by most graphs. This will be discussed later in Section~\ref{sec:tree-uni-ecc}. In any case, the uni-ecc property enables us to derive an upper bound on the center-shift in Theorem~\ref{thm:center-shift-2side}.

\begin{theorem}\label{thm:center-shift-2side}
Let $G$ be a graph that satisfies the uni-ecc property and has center $C_G$, and let $\varphi : G \rightarrow H$ be an $(A,B,C)$-quasi-isometry. Then the center-shift is bounded above by
\[ \left( A - \frac{1}{A} \right) \rad(H)
    + AB + \frac{B}{A} .\]
\end{theorem}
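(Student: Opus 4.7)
The plan is to chase both sides of the eccentricity inequality from Lemma~\ref{lma:q-iso-ecc} and then let the uni-ecc property do the rest of the work. The center-shift is a minimum over pairs in $C_G \times \varphi^{-1}(C_H)$, so it suffices to exhibit \emph{one} vertex $v \in \varphi^{-1}(C_H)$ whose $G$-distance to $C_G$ is bounded by the target quantity. Fixing any such $v$, the point of the uni-ecc hypothesis is that this distance can be rewritten as a difference of two scalars:
\[ d_G(C_G, v) \;=\; \ecc_G(v) - \rad(G). \]
So the whole proof reduces to upper-bounding $\ecc_G(v)$ and lower-bounding $\rad(G)$ in terms of $\rad(H)$ and the quasi-isometry constants.

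First I would upper-bound $\ecc_G(v)$. Since $\varphi(v) \in C_H$, we have $\ecc_H(\varphi(v)) = \rad(H)$. The lower half of Lemma~\ref{lma:q-iso-ecc} then gives $\tfrac{1}{A}\ecc_G(v) - B \le \rad(H)$, which rearranges to
\[ \ecc_G(v) \;\le\; A\,\rad(H) + AB. \]
Next I would lower-bound $\rad(G)$. Pick any $c \in C_G$, so $\ecc_G(c) = \rad(G)$. The upper half of Lemma~\ref{lma:q-iso-ecc} applied at $c$ gives $\ecc_H(\varphi(c)) \le A\,\rad(G) + B$, and since $\rad(H) \le \ecc_H(\varphi(c))$ by definition of the radius, this yields
\[ \rad(G) \;\ge\; \frac{\rad(H) - B}{A}. \]

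Subtracting the second bound from the first,
\[ d_G(C_G, v) \;=\; \ecc_G(v) - \rad(G) \;\le\; A\,\rad(H) + AB - \frac{\rad(H) - B}{A} \;=\; \left(A - \frac{1}{A}\right)\rad(H) + AB + \frac{B}{A}, \]
and since $d_G(C_G, \varphi^{-1}(C_H)) \le d_G(C_G, v)$, this is the claimed inequality.

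I do not expect any real obstacle: the argument is essentially a one-line calculation once the uni-ecc hypothesis is unpacked. The only things that need care are (i) choosing the correct direction of Lemma~\ref{lma:q-iso-ecc} in each of the two bounds (lower bound on $\ecc_H \circ \varphi$ at $v$, upper bound at $c$), and (ii) noticing that the density constant $C$ plays no role here because $\varphi$ is surjective and the argument only uses inequalities at specific vertices. The role of the uni-ecc assumption is exactly to convert the ``distance to $C_G$'' into the difference $\ecc_G(v) - \rad(G)$ so that Lemma~\ref{lma:q-iso-ecc} can be applied term by term; without it, only one of the two bounds would be available and the symmetric expression $(A - 1/A)\rad(H)$ would not arise.
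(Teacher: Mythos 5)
Your proof is correct and follows essentially the same route as the paper's: unpack the center-shift via the uni-ecc property into $\ecc_G(v) - \rad(G)$, bound the first term from above and the second from below using the two halves of Lemma~\ref{lma:q-iso-ecc}, and use $\rad(H) = \ecc_H(\varphi(v)) \le \ecc_H(\varphi(c))$ to express everything in terms of $\rad(H)$. If anything, your version is slightly cleaner at the decomposition step: you invoke the uni-ecc hypothesis explicitly to write $d_G(C_G,v) = \ecc_G(v) - \rad(G)$, whereas the paper cites Proposition~\ref{prop:ecc-diff} there, which by itself only gives the reverse inequality, so the uni-ecc property is indeed the correct justification.
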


\begin{proof}
Let $g\in C_G$ and $v\in\varphi^{-1}(C_H)$ be appropriate vertices such that $d_G (g, v)$ equals the center-shift. In other terms,
\[ d_G (g, v) = d_G \big( C_G , \varphi^{-1}(C_H) \big) ,\]
which by Proposition~\ref{prop:ecc-diff} can also be written as
\[ d_G (g, v) \leq \ecc_G(v) - \ecc_G(g) .\]
Then for $\ecc_G(v)$, Lemma~\ref{lma:q-iso-ecc} implies that $\ecc_G(v) \leq A \cdot \big( \ecc_H \big( \varphi(v) \big) + B \big)$,
Similarly, for $\ecc_G(g)$ we have
$-\ecc_G(g) \leq - (1/A) \cdot \big( \ecc_H \big( \varphi(g) \big) - B \big)$.
These two inequalities combine into:
\begin{align*}
d_G \big( C_G , \varphi^{-1}(C_H) \big) &= d_G (g, v) \\
&\leq 
\ecc_G(v) - \ecc_G(g)\\
&\leq
A \big( \ecc_H \big( \varphi(v) \big) + B \big)
 - \frac{1}{A} \big( \ecc_H \big( \varphi(g) \big) - B \big)\\
&= A \cdot \ecc_H \big( \varphi(v) \big) - \frac{1}{A} \cdot \ecc_H \big( \varphi(g) \big)
 + AB + \frac{B}{A} .
\end{align*}
Our assumption $v\in\varphi^{-1}(C_H)$ means that $\varphi(v)\in C_H$, so $\ecc_H \big( \varphi(v) \big) \leq \ecc_H \big( \varphi(g) \big)$.
Hence finally,
\[ d_G \big( C_G , \varphi^{-1}(C_H) \big) \leq
   \left( A - \frac{1}{A} \right) \cdot \ecc_H \big( \varphi(v) \big) 
   + AB + \frac{B}{A} .\]
Since $\ecc_H \big( \varphi(v) \big) = \rad(H)$, the proof is complete.
\end{proof}

Earlier, Proposition~\ref{prop:pgrf-path-leq} showed that $d_H \big( \varphi(v_1), \varphi(v_2) \big) \leq d_G (v_1, v_2)$, so the distance inequality of a partition-graph is more specific than a quasi-isometry. Such a ``one-sided quasi-isometry'' of a partition-graph yields a slightly more specific bound as demonstrated in Corollary~\ref{cor:center-shift-1side}. Its proof is omitted, as it is almost identical to the proof of Theorem~\ref{thm:center-shift-2side}.

\begin{corollary}\label{cor:center-shift-1side}
Let $G$ be a graph that satisfies the uni-ecc property and has center $C_G$, and let $\varphi : G \rightarrow H$ be a mapping that satisfies
$(1/A) \cdot d_G (x,y) - B
   \leq d_H \big( \varphi(x), \varphi(y) \big)
   \leq d_G (x,y)$.
Then the center-shift is bounded above by 
$(A-1) \cdot \rad(H) + AB$.
\end{corollary}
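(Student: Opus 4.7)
The plan is to mirror the proof of Theorem~\ref{thm:center-shift-2side} almost step for step, but exploit the one-sided hypothesis $d_H(\varphi(x),\varphi(y)) \leq d_G(x,y)$ to sharpen the ``upper'' half of the eccentricity estimate. First I would re-derive a one-sided analogue of Lemma~\ref{lma:q-iso-ecc}. The lower bound $\tfrac{1}{A}\ecc_G(v) - B \leq \ecc_H(\varphi(v))$ is unaffected, because its proof only uses the lower half of the quasi-isometric inequality, which is still in force. The upper bound, however, tightens to $\ecc_H(\varphi(v)) \leq \ecc_G(v)$, with no multiplicative stretch and no additive term: for any $u \in G$ such that $\varphi(u)$ is an ecc-witness of $\varphi(v)$ in $H$, the hypothesis gives $d_H(\varphi(v),\varphi(u)) \leq d_G(v,u) \leq \ecc_G(v)$.

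Next I would pick $g \in C_G$ and $v \in \varphi^{-1}(C_H)$ realising the center-shift $d_G(g,v)$, and use the uni-ecc property of $G$ together with Proposition~\ref{prop:ecc-diff} to write $d_G(g,v) \leq \ecc_G(v) - \ecc_G(g)$, exactly as in the theorem. Substituting the rearranged eccentricity inequalities (the unchanged lower one giving $\ecc_G(v) \leq A \cdot \ecc_H(\varphi(v)) + AB$, and the sharpened upper one giving $-\ecc_G(g) \leq -\ecc_H(\varphi(g))$) yields
\[ d_G(g,v) \;\leq\; A \cdot \ecc_H(\varphi(v)) - \ecc_H(\varphi(g)) + AB. \]
Since $\varphi(v) \in C_H$ we have $\ecc_H(\varphi(v)) \leq \ecc_H(\varphi(g))$, so the subtracted term can be replaced by $-\ecc_H(\varphi(v))$, collapsing the right-hand side to $(A-1)\cdot\ecc_H(\varphi(v)) + AB$. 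Finally, $\ecc_H(\varphi(v)) = \rad(H)$ because $\varphi(v)$ lies in the center of $H$, yielding the claimed bound.

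There is no real obstacle beyond bookkeeping: the only care required is to notice precisely which half of the quasi-isometric inequality has been strengthened by the hypothesis. The multiplicative loss $A - \tfrac{1}{A}$ in the theorem collapses to $A - 1$ exactly because the upper-side stretch disappears, and the additive loss $AB + \tfrac{B}{A}$ collapses to $AB$ because the upper-side additive constant vanishes as well. Everything else is a verbatim copy of the theorem's proof, which is why omitting the written-out proof in the paper is reasonable.
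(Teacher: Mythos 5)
Your proposal is correct and is exactly the argument the paper has in mind when it says the proof is ``almost identical'' to that of Theorem~\ref{thm:center-shift-2side}: the one-sided hypothesis sharpens the upper half of the eccentricity estimate in Lemma~\ref{lma:q-iso-ecc} to $\ecc_H(\varphi(v)) \leq \ecc_G(v)$, and the rest of the derivation is carried over verbatim, collapsing $A - \tfrac{1}{A}$ to $A-1$ and $AB + \tfrac{B}{A}$ to $AB$. No gaps.
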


\subsection{Trees and Uni-Ecc Property}\label{sec:tree-uni-ecc}

Proposition~\ref{prop:ecc-diff} from earlier can be expressed as $d_G(C_G,v) \geq \ecc(v) - \rad(G)$ for any $v\in G$. Hence, the uni-ecc property is in fact a strong property that is not satisfied by most graphs. Theorem~\ref{thm:tree-uni-ecc} proves that trees satisfy the uni-ecc property, while Figure~\ref{fig:uni-ecc-counter-eg} shows a chordal graph that does not satisfy it.

\begin{figure}[t]
  \centering
  \includegraphics[scale=0.7]{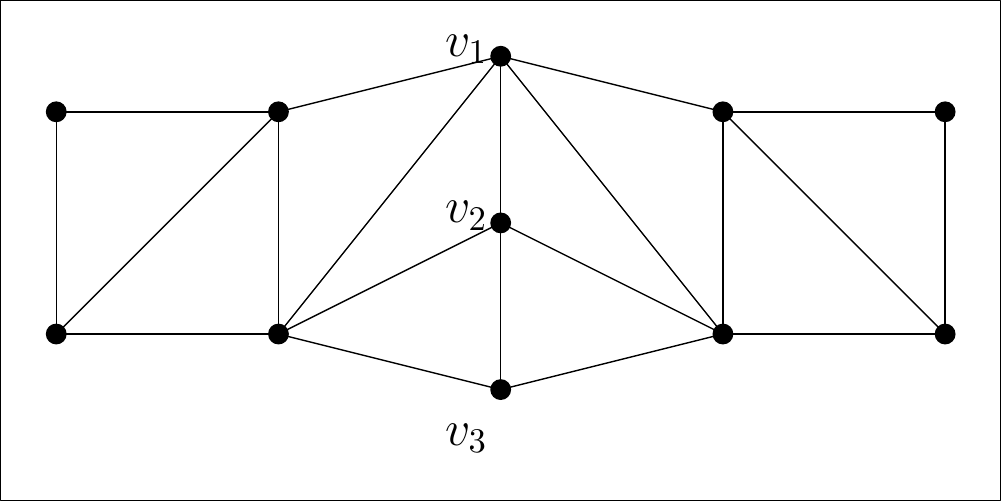}
  \caption{A chordal graph that does not satisfy the uni-ecc property. The center is $\{ v_1 \}$, but $d(C_G,v_3) = 2 \neq \ecc(v_3) - \rad(G) = 3 - 2$.}
  \label{fig:uni-ecc-counter-eg}
\end{figure} 

The proof of Theorem~\ref{thm:tree-uni-ecc} requires Lemma~\ref{lma:cent-ecc-wit} first. 

\begin{lemma}\label{lma:cent-ecc-wit}
Let $v$ be any vertex in a tree $T$, and let $c \in C_T$ be the center-vertex such that $d(v,c) = d(v,C_T)$. Then $c$ has an ecc-wit $w$ such that the path $v,\ldots w$ passes through $c$.
\end{lemma}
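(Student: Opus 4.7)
The plan is to argue by contradiction. The idea: if no ecc-wit of $c$ lies on the ``opposite side'' of $c$ from $v$, then the neighbor $u$ of $c$ pointing toward $v$ would itself have small enough eccentricity to land in $C_T$, contradicting the choice of $c$.

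First I would dispose of the trivial case $v = c$, where any ecc-wit of $c$ works. Otherwise, let $u$ be the unique neighbor of $c$ on the $c$-to-$v$ path. The choice of $c$ forces $u \notin C_T$, since $d(v, u) = d(v, c) - 1$ would otherwise contradict the minimality of $d(v, c)$ over elements of $C_T$. Root $T$ at $c$ and let $S_1, \ldots, S_k$ be the subtrees hanging off $c$, with $u \in S_1$. Because $T$ is a tree, the $v$-to-$x$ path passes through $c$ precisely when $x \notin S_1$; so what I want is exactly an ecc-wit of $c$ lying outside $S_1$.

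Assume for contradiction that every ecc-wit of $c$ lies in $S_1$, equivalently $\max_{x \in S_j} d(c, x) \leq \rad(T) - 1$ for every $j \geq 2$. A short case analysis using the rooted structure bounds $\ecc(u)$: for $x \in S_1$ the $u$-to-$x$ path avoids $c$, giving $d(u, x) = d(c, x) - 1 \leq \rad(T) - 1$; for $x$ outside $S_1$ (including $x = c$) the path uses $c$, giving $d(u, x) = d(c, x) + 1 \leq \rad(T)$. Hence $\ecc(u) \leq \rad(T)$, so $u \in C_T$, contradicting $u \notin C_T$.

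The main obstacle is the tidy distance bookkeeping for $\ecc(u)$ after rooting at $c$, which is routine but must be laid out carefully; the degenerate case $\rad(T) = 0$ collapses into $|T| = 1$ and hence $v = c$, absorbed by the trivial case.
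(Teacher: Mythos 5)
Your proof is correct, but it takes a genuinely different route from the paper's. The paper case-splits on the classical fact that a tree's center consists of either one vertex or two adjacent vertices: in the one-vertex case it argues that $c$, being the midpoint of a diameter-path, has ecc-wits in at least two distinct subtrees hanging off $c$; in the two-vertex case it uses that every ecc-wit of $c_1$ lies beyond $c_2$. You instead argue by contradiction from first principles: if every ecc-wit of $c$ lay in the subtree $S_1$ containing $v$, then the neighbor $u$ of $c$ toward $v$ would satisfy $\ecc(u) \leq \rad(T)$ and hence lie in $C_T$, contradicting the fact that the minimality of $d(v,c)$ over $C_T$ forces $u \notin C_T$. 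What your approach buys: it never invokes the one-or-two-vertex structure of tree centers or any diameter-path facts, and the distance bookkeeping after rooting at $c$ is fully explicit, whereas the paper's one-vertex case leaves implicit why two ecc-wits of $c$ must lie in different subtrees. Your handling of the edge cases ($v = c$, the trivial tree) is correct; the only step worth writing out in full is that $d(u,c) = 1 \leq \rad(T)$ in your second distance bound needs $\rad(T) \geq 1$, which, as you note, is guaranteed once the single-vertex tree is dismissed.
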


\begin{proof}
A tree's center consists of either one vertex or two adjacent vertices, so we can split the reasoning into two cases.

Suppose the center consists of just one vertex $c$. Since $c$ lies on the mid-point of a diameter-path, it must have at least two ecc-wits. Hence, no matter where our chosen $v$ is, there always exists at least one ecc-wit $w$ of $c$ such that the path $v,\ldots w$ passes through $c$.

Next, suppose the center consists of two adjacent vertices $c_1$ and $c_2$. Note that in this case, it is possible for a center-vertex to have only one ecc-wit. Nonetheless, the path between $c_1$ and any of its ecc-wits must pass through $c_2$, and \emph{vice versa}. Hence, given any $v$, let $c_1$ be the center-vertex that is closer to $v$ without loss of generality. Then, the path between $c_1$ and every ecc-wit of $c_1$ must pass through $c_2$, so consequently the path between $v$ and every ecc-wit of $c_1$ must pass through $c_2$. Therefore, there always exists at least one ecc-wit $w$ of $c_1$ such that the path $v,\ldots w$ passes through $c_1$.
\end{proof}

\begin{theorem}\label{thm:tree-uni-ecc}
In a tree $T$ with center $C_T$, every $v\in T$ satisfies $d(C_T, v) = \ecc(v) - \rad(T)$.
\end{theorem}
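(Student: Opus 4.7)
The plan is to prove the equality by establishing both inequalities separately, with the nontrivial direction handled by Lemma~\ref{lma:cent-ecc-wit}, which has already set up the geometry we need.

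For the ``$\geq$'' direction (i.e., $d(C_T,v) \geq \ecc(v) - \rad(T)$), I would simply invoke Proposition~\ref{prop:ecc-diff}: pick any center-vertex $c \in C_T$ that realizes the distance $d(v, C_T) = d(v, c)$. Then $\ecc(v) - \ecc(c) \leq d(v, c)$, and since $\ecc(c) = \rad(T)$ by definition of the center, rearranging yields $\ecc(v) - \rad(T) \leq d(C_T, v)$. Note this direction actually holds for \emph{any} graph, not just trees, so the tree hypothesis is only needed for the other inequality.

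For the ``$\leq$'' direction (i.e., $d(C_T,v) \leq \ecc(v) - \rad(T)$), I would again let $c \in C_T$ be the center-vertex closest to $v$, and then apply Lemma~\ref{lma:cent-ecc-wit} to obtain an eccentricity-witness $w$ of $c$ such that the unique path from $v$ to $w$ in the tree passes through $c$. Because the tree path is unique, this passage-through-$c$ condition gives the additive decomposition
\[ d(v, w) = d(v, c) + d(c, w) = d(C_T, v) + \ecc(c) = d(C_T, v) + \rad(T). \]
Since $w$ is some vertex in $T$, we immediately have $\ecc(v) \geq d(v, w) = d(C_T, v) + \rad(T)$, which rearranges to the desired bound.

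Combining the two inequalities yields the equality stated in the theorem. The only real content lies in Lemma~\ref{lma:cent-ecc-wit}, which is already established and exploits the characteristic tree-property that the center lies on every diameter-path; without this, one cannot force the ecc-witness of $c$ to lie ``on the far side'' of $c$ from $v$, which is exactly why the statement fails for general graphs as illustrated in Figure~\ref{fig:uni-ecc-counter-eg}. So I would not expect any serious obstacle in carrying out the above plan — the lemma-based reduction keeps the proof quite short.
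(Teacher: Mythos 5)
Your proof is correct, and it takes a slightly different (and in fact leaner) route than the paper's. Both arguments hinge on Lemma~\ref{lma:cent-ecc-wit} and the additive decomposition $d(v,w)=d(v,c)+d(c,w)$ along the unique tree path, but the paper proves the equality in one shot by establishing the stronger intermediate claim that $w$ is an eccentricity-witness of $v$ itself, which requires a two-case analysis on whether the path $v,\ldots,x$ passes through $c$. You instead split the statement into two inequalities: the ``$\geq$'' direction from Proposition~\ref{prop:ecc-diff} (which, as you note and as the paper itself remarks at the start of Section~\ref{sec:tree-uni-ecc}, holds in any graph), and the ``$\leq$'' direction from the trivial bound $\ecc(v)\geq d(v,w)$ applied to the witness supplied by the lemma. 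This sidesteps the paper's case analysis entirely and isolates exactly where the tree hypothesis enters, at the cost of nothing; the paper's version does yield the mildly stronger byproduct that the far-side witness $w$ is simultaneously an ecc-witness of $v$, but that fact is not needed for the theorem. Your correct identification of $\ecc(c)=\rad(T)$ for $c\in C_T$ and your choice of $c$ as the center-vertex realizing $d(v,C_T)$ (as required by the lemma's hypothesis) are both in order, so the argument is complete.
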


\begin{proof}
For every vertex $v$, let $c$ be its corresponding center-vertex such that $d(v,c) = d(v,C_T)$. Then by Lemma~\ref{lma:cent-ecc-wit}, there is an ecc-wit $w$ of $c$ such that the path $v,\ldots w$ passes through $c$. This means that $d(v,w) = d(v,c) + d(c,w)$.

We then show that $w$ is not only an ecc-wit of $c$, but also an ecc-wit of $v$. For this we need to establish that $\forall x\in T : d(v,x) \leq d(v,w)$. We look at two cases for $x$.

Firstly, suppose the path $v,\ldots x$ passes through $c$. Then
\begin{align*}
      d(v,x)
   &= d(v,c) + d(c,x) \\
&\leq d(v,c) + d(c,w) &\text{($w$ is ecc-wit of $c$)}\\
   &= d(v,w) .
\end{align*}

Secondly, suppose $v,\ldots x$ passes through $c$. Then consider the two paths $v,\ldots c$ and $v,\ldots x$. Let $v'$ be the last common vertex on these paths. Hence,
\begin{align*}
      d(v,x)
   &= d(v,v') + d(v',x)\\
   &= d(v,v') + d(c,x) - d(c,v') \\
&\leq d(v,v') + d(c,w) - d(c,v') &\text{($w$ is ecc-wit of $c$)}\\
   &< d(v,v') + d(c,w) + d(c,v') &\text{(adding positive $d(c,v')$ twice)}\\
   &= d(v,w) .
\end{align*}
Combining the two cases, we can conclude that $w$ is an ecc-wit of $v$. This means that
\begin{align*}
   \ecc(v) - \rad(T)
&= \ecc(v) - \ecc(c)\\
&= d(v,w) - d(c,w)\\
&= d(v,c)\\
&= d(v,C_T),
\end{align*}
which is the statement of Theorem~\ref{thm:tree-uni-ecc}.
\end{proof}

Theorem~\ref{thm:center-shift-2side} and Corollary~\ref{cor:center-shift-1side} already ensure that any quasi-isometric simplification of a tree has a bounded center-shift. However, this upper bound is a function of the radius of the tree, so we want to further investigate if the center-shift can be bounded by a constant under particular quasi-isometric simplifications of trees.

\section{Outward-Contraction and Centers of Trees}
\label{sec:outward-contraction}

This section studies \emph{partition-trees}, which means partition-graphs on trees. We present the \emph{outward-contraction} algorithm, which constructs a specific type of partitions on trees. We then show that outward-contraction always produces partition-trees with center-shift zero.

Corollary~\ref{cor:center-shift-1side} together with Theorem~\ref{thm:tree-uni-ecc} showed that the center-shift of any partition-tree is bounded by a function of its radius. However, when the partition-tree has a large radius, the center-shift as a numerical value can still become arbitrarily large, which we discuss in Example~\ref{eg-comb} below.

\begin{example}\label{eg-comb}
Given a non-negative integer $k$, this example demonstrates that there is a tree $G_k$ with a partition, such that the centre-shift is $k$. The tree $G_k$ is built from three smaller trees described below.

Let $H$ and $M_k$ be the trees as shown in Figure~\ref{fig:tree-comb-0}, and let $P_{3k}$ be the path-graph on $3k$ vertices labelled as $v_1,\ldots v_{3k}$.

Then $G_k$ is obtained by linking $v_{3k}$ in $P_{3k}$ and $a$ in $H$ with an edge, and linking $x_k$ in $M_k$ and $b$ in $H$ with another edge. The result is the right-most graph in Figure~\ref{fig:tree-comb-0}.

\begin{figure}[t]
  \centering
  \includegraphics[scale=0.8]{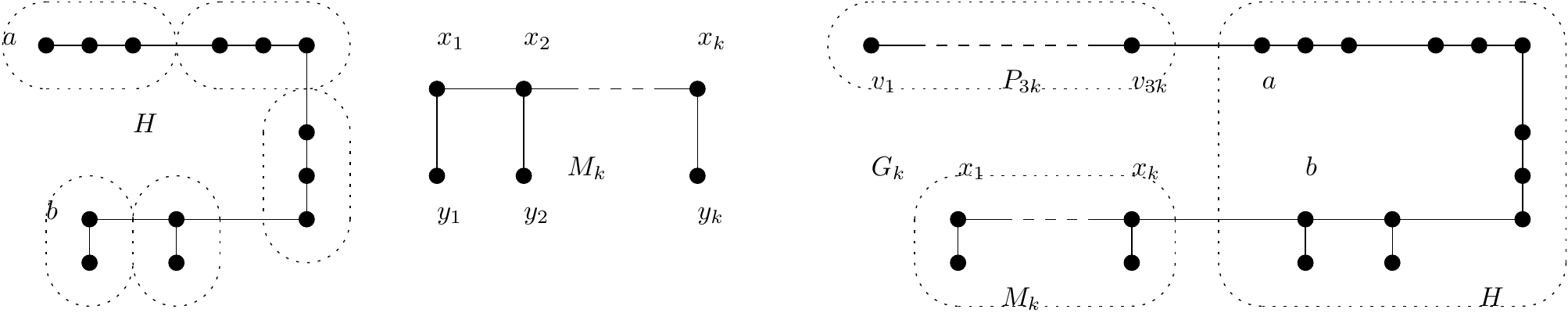}
  \caption{For Example~\ref{eg-comb}: the tree $H$ and its partition of interest, the tree $M_k$, and the combined tree of interest $G_k$.}
  \label{fig:tree-comb-0}
\end{figure}

The partition on $G_k$ is as follows. On the $P_{3k}$ part, we group $v_{3j-2}$, $v_{3j-1}$ and $v_j$ into one super-vertex for every $1\leq j\leq k$. On the $H$ part, we group the vertices as depicted in Figure~\ref{fig:tree-comb-0}. On the $M_k$ part, we group $\{ x_j, y_j \}$ for every $1\leq j\leq k$. Figure~\ref{fig:tree-comb} demonstrates this construction for $k=2$ and $k=3$.

\begin{figure}[t]
  \centering
  \includegraphics[scale=0.8]{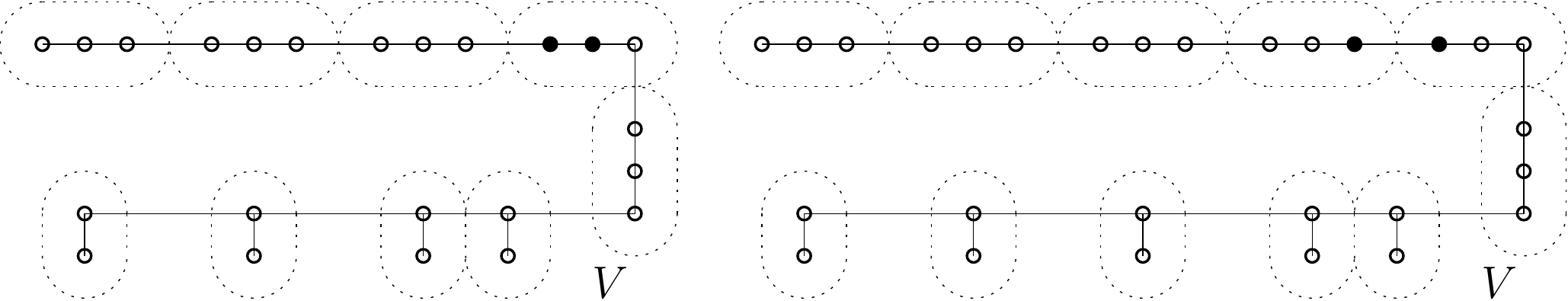}
  \caption{Example~\ref{eg-comb} with $k=2$ and $k=3$.}
  \label{fig:tree-comb}
\end{figure}

Finally we compute the center-shift by locating the center of $G_k$ and the center of the partition-tree $\ol{G}_k$. In $G_k$, the path from $v_1$ to $y_1$ is a diameter-path, and the center of $G_k$ coincides with the center of this path. This path has $4k+12$ vertices, so if we re-label its vertices from one end to the other as $1,\ldots 4k+12$, then its center is $\{ 2k+6, 2k+7\}$.
On the other hand, by counting the super-vertices as in Figure~\ref{fig:tree-comb}, one can easily see that the center of $\ol{G}_k$ consists only one super-vertex $\{ 3k+7, 3k+8, 3k+9 \}$.
Overall, the center-shift is $(3k+7)-(2k+7) = k$.
\end{example}

Although Example~\ref{eg-comb} showed that the center-shift of any arbitrary partition-tree is not always a small number, we now present the \emph{outward-contraction} algorithm, which generates a partition in a particular way, such that the resulting partition-tree has center-shift zero. Outward-contration takes any unrooted tree as its input, and designates an arbitrary vertex as the root. With the root designated, the following definitions will become useful.

\begin{definition}[Level-function]\label{def:tree-level}
For every vertex $v$ in a tree $T$ with root $r$, the \emph{level} of $v$ is defined to be $\lev(v) = d(v,r)$.
\end{definition}

\begin{definition}[Monotone path]\label{def:monotone-path}
In a rooted tree, a simple path $v_1,\ldots v_k$ is called \emph{monotone} when either:
\begin{itemize}
  \item $\lev(v_i) < \lev(v_{i+1})$ for all $1 \leq i < k$, or
  \item $\lev(v_i) > \lev(v_{i+1})$ for all $1 \leq i < k$.
\end{itemize}
\end{definition}

\begin{proposition}\label{prop:at-most-1-turnpt}
In a rooted tree, a simple path either is monotone, or can be decomposed into two monotones simple paths.
\end{proposition}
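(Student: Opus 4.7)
The plan is to analyze the level sequence $\ell_i = \lev(v_i)$ along a simple path $\mathbf{v} = v_1, v_2, \ldots, v_k$ and show that it has a very constrained shape: either strictly monotonic, or unimodal with a unique minimum. First, I will invoke the basic fact that in a rooted tree every non-root vertex has a unique parent, which implies that any edge connects a vertex to its parent, so adjacent vertices differ in level by exactly $1$. Consequently, $(\ell_1, \ldots, \ell_k)$ is a walk on $\mathbb{Z}$ with unit steps $\pm 1$.

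The key structural claim is that this walk admits no interior local maximum. If there were an index $1 < i < k$ with $\ell_{i-1} < \ell_i > \ell_{i+1}$, then $\ell_{i-1} = \ell_i - 1 = \ell_{i+1}$, so both $v_{i-1}$ and $v_{i+1}$ would have to equal the unique parent of $v_i$, contradicting the simplicity of $\mathbf{v}$. The observation extends immediately: at most one interior local minimum can occur, because two local minima would force the sequence to rise just after the first and fall just before the second, producing a forbidden local maximum strictly between them.

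Having pinned down the shape of the sequence, the conclusion is immediate. If no interior local minimum exists, then together with the absence of interior local maxima the sequence is strictly monotonic, so $\mathbf{v}$ is monotone. Otherwise let $v_j$ be the unique local minimum (necessarily with $1 < j < k$); the subpaths $v_1, \ldots, v_j$ and $v_j, \ldots, v_k$ have strictly decreasing and strictly increasing level sequences respectively, and each inherits simplicity from $\mathbf{v}$, so both are monotone simple paths.

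The main obstacle is more presentational than mathematical: the entire argument rests on the single tree-theoretic fact that the parent relation of a rooted tree is a function. Once that is in hand, the ``turning point'' is forced to be a local minimum corresponding to the lowest common ancestor of $v_1$ and $v_k$, and the decomposition writes itself. I do not anticipate any subtle edge cases beyond noting that when the extremum lies at an endpoint the path was already monotone, so only the strictly interior case requires a nontrivial split.
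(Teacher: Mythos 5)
Your proof is correct, but it takes a genuinely different route from the paper's. The paper argues globally: it takes the root-to-$v_1$ and root-to-$v_k$ paths, lets $u$ be their last common vertex (the lowest common ancestor), and observes that the unique $v_1$--$v_k$ path passes through $u$ and splits there into two monotone pieces. You instead argue locally on the level sequence, viewing it as a $\pm 1$ walk and ruling out interior local maxima via the uniqueness of the parent plus simplicity of the path; this forces the sequence to be monotone or unimodal with a single interior minimum, which is then the turning point. Your version is more detailed where the paper is terse (the paper simply asserts that the two halves at $u$ are monotone), and it directly yields the reformulation stated after Definition~\ref{def:turn-pt} that every path has at most one turning-point, characterizing it as the unique interior local minimum of the level function. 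The paper's version is shorter and identifies the turning point concretely as the lowest common ancestor, which is a useful handle in its own right; the two identifications of course agree. One small presentational caveat: your claim that adjacent vertices differ in level by exactly $1$ deserves the one-line justification you gesture at (every edge of a rooted tree joins a vertex to its parent), since the whole argument hinges on it.
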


\begin{proof}
Consider the unique simple path between any two vertices $v_1$ and $v_2$ in a tree. If this path is monotone, then the statement already holds.

On the other hand, suppose this path is not monotone. Then consider the following two paths $r, \ldots v_1$ and $r, \ldots v_2$. Let $u$ be the last common vertex of these two paths. Then $v_1, \ldots u, \ldots v_2$ is the unique path between $v_1$ and $v_2$, and indeed this path can be decomposed into two monotones simple paths.
\end{proof}

In the proof of Proposition~\ref{prop:at-most-1-turnpt} above, it is convenient to refer to the vertex $u$ as a \emph{turning-point}, which is formalised in Definition~\ref{def:turn-pt}.

\begin{definition}[Turning-point]\label{def:turn-pt}
The \emph{turning-point} of a non-monotone path $v_1, \ldots v_k$ is the vertex $v_i$ (with $2\leq i\leq k-1$) such that $\lev(v_{i-1}) > \lev(v_i)$ and $\lev(v_i) < \lev(v_{i+1})$. 
\end{definition}

Hence, Proposition~\ref{prop:at-most-1-turnpt} can also be phrased as ``every path in a rooted tree has at most one turning-point''.

\begin{definition}[Outward-contraction]\label{def:outward-contraction}
The \emph{outward-contraction} algorithm takes a tree as input, and constructs a partition as follows.
\begin{enumerate}
  \item It designates an arbitrary vertex as a root.
  \item For every vertex $v$ with $\lev(v)$ an even number, let $N_{\downarrow}(v)$ be the set of neighbors $u$ of $v$ such that $\lev(v) < \lev(u)$. Then outward-contraction assigns each $v$ and $N_{\downarrow}(v)$ into one super-vertex.
  \item The output is the corresponding partition-tree.
\end{enumerate}
\end{definition}

In a partition-tree produced by outward-contraction, each super-vertex has the following three possible forms. Let $v$ be a vertex with even $\lev(v)$:
\begin{itemize}
  \item If $N_{\downarrow}(v)$ is empty, then the super-vertex consists of just $v$. Hence the diameter of the super-vertex is zero.
  \item If $N_{\downarrow}(v)$ has only one vertex, then the super-vertex consists of only two adjacent vertices. Hence the diameter of the super-vertex is one.
  \item If $N_{\downarrow}(v)$ has at least two vertices, then the super-vertex has diameter two.
\end{itemize}
Consequently, the partitions produced by outward-contraction are 2-sharp and 0-coarse.
Figure~\ref{fig:root-contr-med-eg} shows an example of outward-contraction, where the designated vertex is marked by a square.

\begin{figure}[t]
  \centering
  \includegraphics[scale=0.8]{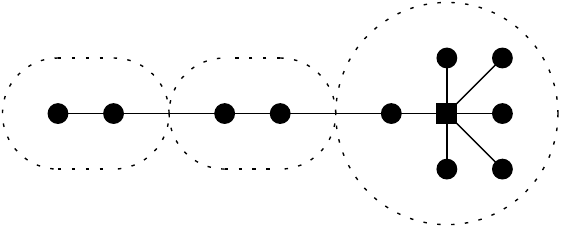}
  \caption{An example of outward-contraction starting from the square vertex. Note that the center is preserved but the median is not.}
  \label{fig:root-contr-med-eg}
\end{figure}

The center of a tree always lies on a diameter-path, and hence is the same as the center of any diameter-path of the tree~\cite{wu-chao-04}. This allows us to reduce the problem of locating the center of a tree into the simpler problem of locating the center of a path.

On the path-graph $P_n$, a partition can be expressed as a sequence of natural numbers that represent the sizes of the super-vertices from left to right. The numbers in such a sequence sum to $n$, so sequences like these are in fact integer compositions. Thus a partition-path $P_k$ of the path-graph $P_n$ can be viewed an integer composition of $n$ with length $k$. The constituent numbers of the integer composition $P_k$ is denoted by $P_k[1], P_k[2], \ldots P_k[k]$.

\begin{definition}
Let $w$ be an integer composition of length $k$. Then the set of \emph{center-indices} is
\begin{itemize}
  \item $\{(k+1)/2\}$   when $k$ is odd, or
  \item $\{k/2,~k/2+1\}$ when $k$ is even.
\end{itemize}
Furthermore:
\begin{itemize}
  \item The \emph{center-sum} $\sigma$ is $\sum w[i]$, for $i$ in the set of center-indices.
  \item The \emph{left-sum} $\lambda$ is $\sum w[i]$, for $i$ smaller than all the center-indices.
  \item The \emph{right-sum} $\rho$ is $\sum w[i]$, for $i$ larger than all the center-indices.
\end{itemize}
\end{definition}

\begin{example}
Consider $w=$ \tw{332231}, which represents a partition on $P_{14}$. (In the rest of the paper, we write integer compositions in \tw{typewriter} \tw{font} to aid clarity.) The center-indices of $w$ are 3 and 4, so the center-sum $\sigma = w[3] + w[4] = 2+2 = 4$. Its left-sum is $\lambda = 6$, its right-sum is $\rho = 4$, and $|\lambda - \rho| = 2$.
Now, since $\sigma \geq |\lambda - \rho|$, we can straightaway conclude that the center-shift is zero. 
\end{example}

Now, by executing leaf-removal (Note~\ref{note:leaf-removal}) on $P_k$ and $P_n$, one can observe the useful fact in Lemma~\ref{lma:int-comp-center-shift} below.

\begin{lemma}\label{lma:int-comp-center-shift}
Let $P_k$ be a partition-path of $P_n$, and let $w$ be the integer composition that represents $P_k$. Also, let $\sigma$, $\lambda$ and $\rho$ respectively denote the center-sum, left-sum and right-sum of $w$. Then the center-shift is 0 if $\sigma \geq |\lambda - \rho|$, or $\lceil \big(|\lambda - \rho| - \sigma\big)/2 \rceil$ otherwise.
\end{lemma}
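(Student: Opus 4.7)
The plan is to compute the center-shift by directly locating $C_{P_n}$ and $\varphi^{-1}(C_{P_k})$ as explicit subsets of vertex positions in $P_n$. First, I would label the vertices of $P_n$ by $1, 2, \ldots, n$ so that the super-vertices of the partition correspond to contiguous blocks in this order. Under this labelling, $\varphi^{-1}(C_{P_k})$ is exactly the contiguous block of positions $[\lambda + 1, \lambda + \sigma]$, while iterating leaf-removal (Note~\ref{note:leaf-removal}) on $P_n$ shows that $C_{P_n}$ is the middle position set $\{c_\ell, c_r\}$ with $c_\ell = \lfloor (n+1)/2 \rfloor$ and $c_r = \lceil (n+1)/2 \rceil$ (the two coincide when $n$ is odd).

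Because $P_n$ is a path, the distance between any two of its vertex subsets equals the positional gap between them, so the center-shift is simply the one-dimensional distance between the block $[\lambda + 1, \lambda + \sigma]$ and the set $\{c_\ell, c_r\}$. By reversing the path if necessary, which exchanges $\lambda$ with $\rho$ but leaves $\sigma$ and the set of center-indices unchanged, I may assume without loss of generality that $\lambda \geq \rho$ and write $D = \lambda - \rho \geq 0$.

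In the overlap case $D \leq \sigma$, I would check that $c_r \in [\lambda + 1, \lambda + \sigma]$ by substituting $n = \lambda + \sigma + \rho$ into $c_r = \lceil (n+1)/2 \rceil$; the two endpoint inequalities collapse to $D \leq \sigma$ and $D + \sigma \geq 2$, the latter following from $\sigma \geq 1$ together with the parity fact below. Hence the shift is $0$, matching the claim. In the non-overlap case $D > \sigma$, the interval lies strictly to the right of $c_r$, so the shift equals $\lambda + 1 - c_r$; substituting and simplifying reduces this to $\lceil (D - \sigma)/2 \rceil$, which is exactly $\lceil (|\lambda - \rho| - \sigma)/2 \rceil$.

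The main obstacle I anticipate is the parity bookkeeping, since the subcases $n$ odd and $n$ even simplify $\lceil (n+1)/2 \rceil$ differently. The unifying observation is that $D - \sigma \equiv n \pmod{2}$, which follows from the identity $D - \sigma - n = -2(\rho + \sigma)$; this forces $D - \sigma$ and $n$ to share parity and makes the ceiling $\lceil (D-\sigma)/2 \rceil$ resolve to the same closed form in both subcases, closing the argument without any further case split.
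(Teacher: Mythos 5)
Your proof is correct and follows essentially the same route as the paper's: both locate the center of $P_n$ and the preimage block $\varphi^{-1}(C_{P_k})$ by index arithmetic on the path and then compute the positional gap, arriving at the same subtraction $\lceil (\sigma + |\lambda-\rho|)/2\rceil - \sigma = \lceil(|\lambda-\rho|-\sigma)/2\rceil$. The only difference is presentational — the paper first peels off $\lambda$ vertices from each end via leaf-removal and compares the surviving segments, whereas you work directly with the closed-form midpoint $\lceil (n+1)/2 \rceil$ in absolute coordinates and handle the parity via the observation that $|\lambda-\rho|-\sigma \equiv n \pmod 2$.
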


\begin{proof}
Let $C_n$ and $C_k$ be the centers of $P_n$ and $P_k$, respectively.

We picture $P_n$ as consisting of three subgraphs, each connected, called Segments L, C and R. Segment C corresponds to $\varphi^{-1}(C_k)$, and let $\sigma$ be the number of vertices in Segment C. Then, let Segments L and R be the two connected components after removing Segment C from $P_n$. Let $\lambda$ and $\rho$ respectively denote the number of vertices in Segments L and R, and assume $\lambda \leq \rho$ without loss of generality.

Now we use the leaf-removal algorithm (Note~\ref{note:leaf-removal}) to locate $C_n$, the center of $P_n$. On a path-graph, one iteration of leaf-removal is the same as removing both endpoints. Later we will calculate its distance in $P_n$ between $C_n$ and Segment C.

We first carry out $\lambda$ iterations. Every vertex in Segment L is removed, so we are left with all the vertices in Segment C, plus potentially some vertices in Segment R if $\lambda < \rho$. Let Segment R-L denote the left-over $\rho - \lambda$ vertices on Segment R. From here there are three possible cases.

\begin{enumerate}
  \item When Segments C and R-L have equal length, the center of $P_n$ is made up of the right-most vertex in Segment C and the left-most vertex in Segment R-L, so the center-shift is zero.
  \item When Segment C is longer than R-L, the center of $P_n$ lies in Segment C, so the center-shift is zero.
\end{enumerate}
These two cases combine to prove the first part of Lemma~\ref{lma:int-comp-center-shift}: the center-shift is 0 when $\sigma \geq |\lambda - \rho|$. In contrast, the final case involves more effort to quantify the non-zero center-shift.

\begin{enumerate}[start=3]
  \item When Segment R-L is longer than C, the center of $P_n$ falls in Segment R-L, and the non-zero center-shift is the distance between $C_n$ and Segment C. This distance is the same as the distance between the right-most vertex in Segment C and the left-most vertex of $C_n$.
\end{enumerate}
The right-most vertex in Segment C has index $\sigma$. On the other hand, on a path of length $n$, the index of the left-most center-vertex is $\lceil n/2 \rceil$, so the left-most vertex of $C_n$ has index $\lceil (\sigma + |\lambda - \rho|)/2 \rceil$.
Finally, we can derive the center-shift by making the following subtraction:
\[
  \left\lceil \frac{\sigma + |\lambda - \rho|}{2} \right\rceil - \sigma
= \left\lceil \frac{\sigma + |\lambda - \rho|}{2} - \sigma \right\rceil
= \left\lceil \frac{|\lambda - \rho| - \sigma}{2} \right\rceil ,\]
and this proves the second part of Lemma~\ref{lma:int-comp-center-shift}.
\end{proof}

\begin{theorem}\label{thm:tree-center-preserve}
Outward-contraction always produces a partition-tree with center-shift zero.
\end{theorem}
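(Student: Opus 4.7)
The plan is to reduce the problem to a path-graph analysis to which Lemma~\ref{lma:int-comp-center-shift} can be applied. I fix a diameter-path $P = v_0, \ldots, v_D$ of $T$, so that $C_T$ coincides with the center of $P$ viewed as the path-graph $P_{D+1}$. Because outward-contraction produces a $2$-sharp partition and every super-vertex is connected, each super-vertex intersects $P$ in a contiguous sub-path of size $1$, $2$, or $3$. The induced partition on $P$ is therefore captured by an integer composition $w$ of $D+1$ with entries in $\{1,2,3\}$, and $\varphi(P)$, after collapsing consecutive duplicates, is the partition-path associated to $w$.

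The next step exploits that the structure of $w$ is highly constrained. Two consecutive vertices of $P$ lie in the same super-vertex if and only if the parent of the pair is at an even level (by Definition~\ref{def:outward-contraction}); combined with Proposition~\ref{prop:at-most-1-turnpt}, this yields: an entry $3$ in $w$ appears exactly at an even-level turning-point (at most one), an entry $1$ appears at an odd-level turning-point or at a path-endpoint whose parity leaves it alone in its super-vertex, and every other entry equals $2$. Hence $w$ has at most three non-$2$ entries, occurring only at specific positions. A finite case check then verifies $\sigma \geq |\lambda - \rho|$: assuming without loss of generality $\lambda \leq \rho$, the inequality reduces to $\rho \leq \lfloor (D+1)/2 \rfloor$, which holds because the asymmetry $\rho - \lambda$ is generated only by the non-$2$ entries and is absorbed by the center-sum $\sigma$. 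Applying Lemma~\ref{lma:int-comp-center-shift} then gives that $\varphi(C_T)$ meets the center of $\varphi(P)$.

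The main obstacle, and last step, is to upgrade this conclusion from the sub-path $\varphi(P)$ to all of $\ol{T}$. For this, I would choose $P$ so that $\varphi(P)$ is itself a diameter-path of $\ol{T}$; since $\ol{T}$ is a tree by Corollary~\ref{cor:pgrf-tree-retention}, its center lies on every diameter-path, so the center of $\varphi(P)$ would then equal $C_{\ol{T}}$. Such a $P$ can be built by starting from a diameter-path $R = R_0, \ldots, R_m$ of $\ol{T}$, whose endpoints $R_0$ and $R_m$ are leaves of $\ol{T}$, and lifting it to $T$ by selecting in each of $R_0$ and $R_m$ a vertex of maximum level in $T$; the resulting $T$-path has length $\diam(T)$ and maps under $\varphi$ back onto $R$. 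Feeding this $P$ into the earlier analysis then yields some $c \in C_T$ with $\varphi(c) \in C_{\ol{T}}$, and hence the center-shift is zero.
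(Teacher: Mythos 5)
Your first two steps reproduce the paper's argument: restrict the outward-contraction partition to a diameter-path of $T$, observe that the resulting integer composition has a \tw{1} or \tw{3} only at the (unique) turning-point and a \tw{1} or \tw{2} only at the endpoints, with \tw{2}'s everywhere else, verify $\sigma \geq |\lambda - \rho|$ by a finite case check, and invoke Lemma~\ref{lma:int-comp-center-shift}. You are also right to single out the transfer from $\varphi(P)$ to $\ol{T}$ as the delicate point: Lemma~\ref{lma:int-comp-center-shift} locates the center of the partition-\emph{path}, and that set equals $C_{\ol{T}}$ only if $\varphi(P)$ is a diameter-path of $\ol{T}$, which is not automatic. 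For instance, take a root $r$ with children $z$ and $w_1$, where $z$ carries two descending paths $p_2,\ldots,p_5$ and $q_2,\ldots,q_5$ (subscripts indicating levels) and $w_1$ carries $w_2,w_3$: the $T$-diameter-path from $p_5$ to $w_3$ has image of length $3$ in $\ol{T}$ while $\diam(\ol{T})=4$, so its path-center $\{\ol{p_2},\ol{r}\}$ is strictly larger than $C_{\ol{T}}=\{\ol{r}\}$. The paper's own proof passes over this point silently, so your instinct here is a genuine improvement in rigor.

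However, the patch you propose does not close the gap. You lift a diameter-path $R_0,\ldots,R_m$ of $\ol{T}$ by choosing a maximum-level vertex in each of $R_0$ and $R_m$, and assert that the resulting $T$-path has length $\diam(T)$. This is false. Take a spider rooted at its hub $r$ with three legs of lengths $2$, $3$ and $3$ ending at leaves $a_2$, $b_3$ and $c_3$. Outward-contraction yields $\ol{T}=K_{1,3}$ with $\diam(\ol{T})=2$, and $\ol{a_2}=\{a_2\}$ and $\ol{b_2}=\{b_2,b_3\}$ are the endpoints of a legitimate diameter-path of $\ol{T}$; their maximum-level vertices are $a_2$ and $b_3$, yet $d_T(a_2,b_3)=5<6=\diam(T)$. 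So your lifted $P$ need not be a diameter-path of $T$, its center need not be $C_T$, and the reduction collapses. To repair the argument you would need to prove that \emph{some} $T$-diameter-path maps onto a $\ol{T}$-diameter-path and arrange to run the composition analysis on that one, or else relate the path-center of $\varphi(P)$ to $C_{\ol{T}}$ directly; as written, the proposal, like the paper, leaves this reduction step unjustified.
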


\begin{proof}
Let $T$ be the input tree. Outward-contraction arbitrarily designates a vertex, and generates a partition $\mathcal{P}$ on $T$.

Consider a diameter-path $D$ in $T$. Let $\mathcal{P}_D = \{ X \cap D \mid X \in \mathcal{P} \}$ be the restriction of $\mathcal{P}$ to $D$. In the rest of the proof, we focus only on the super-vertices in $\mathcal{P}_D$. Since $D$ is a path, we refer to the sizes of its super-vertices as elements of an integer composition.

Earlier, Proposition~\ref{prop:at-most-1-turnpt} showed that every path in a rooted tree has at most one turning-point. If $D$ does have a turning-point, then the super-vertex containing the turning-point has size either one or three. In the integer composition that represents $\mathcal{P}_D$, such a super-vertex is represented by a \tw{1} or a \tw{3}.
On the other hand, the endpoints of $D$ are contained in super-vertices with size one or two, and hence a \tw{1} or a \tw{2} in the integer composition. Meanwhile, all the remaining super-vertices that contain neither the turning-point nor endpoints always have size two.
We now consider leaf-removal (Note~\ref{note:leaf-removal}) on $D$, which leads to two possible cases.

\begin{enumerate}
  \item Suppose leaf-removal does not encounter the turning-point throughout the execution.
\end{enumerate}
This occurs when $D$ has no turning-point, or when the super-vertex containing the turning-point is in the center of $\ol{D}$ (the partition-path of $D$ induced by $\mathcal{P}_D$). The center of $\ol{D}$ contains either one or two super-vertices, and at most one of these super-vertices contains the turning-point.

If the center of $\ol{D}$ contains only one super-vertex, then either this super-vertex contains the turning-point and has size one or three, or it does not contain a turning point and has size two. Overall, the center-sum of $\mathcal{P}_D$ is one, two or three.

If the center of $\ol{D}$ contains two super-vertices, then these super-vertices correspond to the following possible integer compositions: \tw{12}, \tw{21}, \tw{32}, \tw{23} or \tw{22}. The first four occur when one of these super-vertices in the center contains the turning-point, while the last composition \tw{22} occurs when neither super-vertex in the center contains the turning-point.

Now, the possible center-sum $\sigma$ ranges from one to five. Then there are four further subcases depending on whether each endpoint of $D$ is a \tw{1} or a \tw{2}. These subcases are listed in Table~\ref{table:1} alongside their respective $|\lambda - \rho|$ values. The center-super-vertices are marked by \tw{[]}, and the dots all stand for \tw{2}.

\begin{table}[t]
  \caption{For Case 1 of the proof of Theorem~\ref{thm:tree-center-preserve}.}
  \label{table:1}
  \centering
\begin{tabular}{c|c}
  subcase & $|\lambda - \rho|$ \\\hline
  \tw{12...[]...21} & 0 \\
  \tw{22...[]...22} & 0 \\
\end{tabular}
\quad
\begin{tabular}{c|c}
  subcase & $|\lambda - \rho|$ \\\hline
  \tw{12...[]...22} & 1 \\
  \tw{22...[]...21} & 1 \\
\end{tabular}
\end{table}

As $\sigma \geq |\lambda - \rho|$ in all possible cases, by Theorem~\ref{lma:int-comp-center-shift} the center-shift is always zero.

\begin{enumerate}[start=2]
  \item Suppose leaf-removal encounters the turning-point of $D$ at some point during the execution.
\end{enumerate}
Then the turning-point is not in any super-vertex of the center of $\ol{D}$, so the possible values of the center-sum are two (one super-vertex in the center) and four (two super-vertices in the center).

Without loss of generality, assume that the super-vertex containing the turning-point is on the left-hand side of the center of $\ol{D}$. Depending on whether each endpoint is a \tw{1} or a \tw{2}, as well as whether the super-vertex containing the turning-point is a \tw{1} or a \tw{3}, there are eight subcases listed alongside the corresponding $|\lambda - \rho|$ values in Table~\ref{table:2}. Again, the super-vertices in the center of $\ol{D}$ are marked by \tw{[]}, and the dots all stand for \tw{2}.

\begin{table}[t]
  \caption{For Case 2 of the proof of Theorem~\ref{thm:tree-center-preserve}.}
  \label{table:2}
  \centering
\begin{tabular}{c|c}
  subcase & $|\lambda - \rho|$ \\\hline
  \tw{12..1..[].....21} & 1 \\
  \tw{22..1..[].....22} & 1 \\
  \tw{12..1..[].....22} & 2 \\
  \tw{22..1..[].....21} & 0 \\
\end{tabular}
\quad
\begin{tabular}{c|c}
  subcase & $|\lambda - \rho|$ \\\hline
  \tw{12..3..[].....21} & 1 \\
  \tw{22..3..[].....22} & 1 \\
  \tw{12..3..[].....22} & 0 \\
  \tw{22..3..[].....21} & 2 \\
\end{tabular}
\end{table}

As $\sigma \geq |\lambda - \rho|$ in all cases, by Lemma~\ref{lma:int-comp-center-shift}, the center-shift is always zero.
\end{proof}

\section{Vertex-Weighted Partition-Trees and Medians}\label{sec:vertex-weights}

Although outward-contraction preserves the center of a tree (Theorem~\ref{thm:tree-center-preserve}), it does not always preserve the median. Figure~\ref{fig:root-contr-med-eg} shows a counter-example.

Nevertheless, partition-trees can still preserve the median by taking the sizes of the super-vertices into account. This brings us to define vertex-weighted graphs and the vertex-weighted distance-sum, which were also used in~\cite{hakimi-64}.

\begin{definition}[Vertex-weight]\label{def:vert-wei}
A \emph{vertex-weighted graph} $G$ is a graph with a \emph{vertex-weight} function $f:V(G)\rightarrow\mathbb{R}^+$.
In addition, the \emph{weight of a vertex-subset} $S$, written as $f(S)$, is defined to be the sum of the weights of all $v\in S$.

Then in a vertex-weighted graph $G$, the \emph{vertex-weighted distance-sum} of each vertex $x$ is defined to be $\dsw(x) = \sum_{v\in G} d(x,v) \cdot f(v)$.
Then the \emph{median} of a vertex-weighted graph is the set of vertices that minimize the distance-sum function. 
\end{definition}

Definition~\ref{def:vert-wei} gives rise to Lemma~\ref{lma:tree-vw-ds-cut} and Corollaries~\ref{cor:vert-wei-trees-1},~\ref{cor:vert-wei-trees-2} and~\ref{cor:vert-wei-trees-med}. These then lead to the ultimate Theorem~\ref{thm:median-preserved}.

\begin{lemma}\label{lma:tree-vw-ds-cut}
Let $T$ a vertex-weighted tree with $f$ as the vertex-weight function, and let $x$ and $y$ be adjacent vertices in $T$. Furthermore, let $S_x$ denote the set of vertices that pass through $x$ in order to reach $y$; the set $S_y$ is defined symmetrically. (See Figure~\ref{fig:tree-ds-cut}.)

\begin{figure}[t]
  \centering
  \includegraphics[scale=0.7]{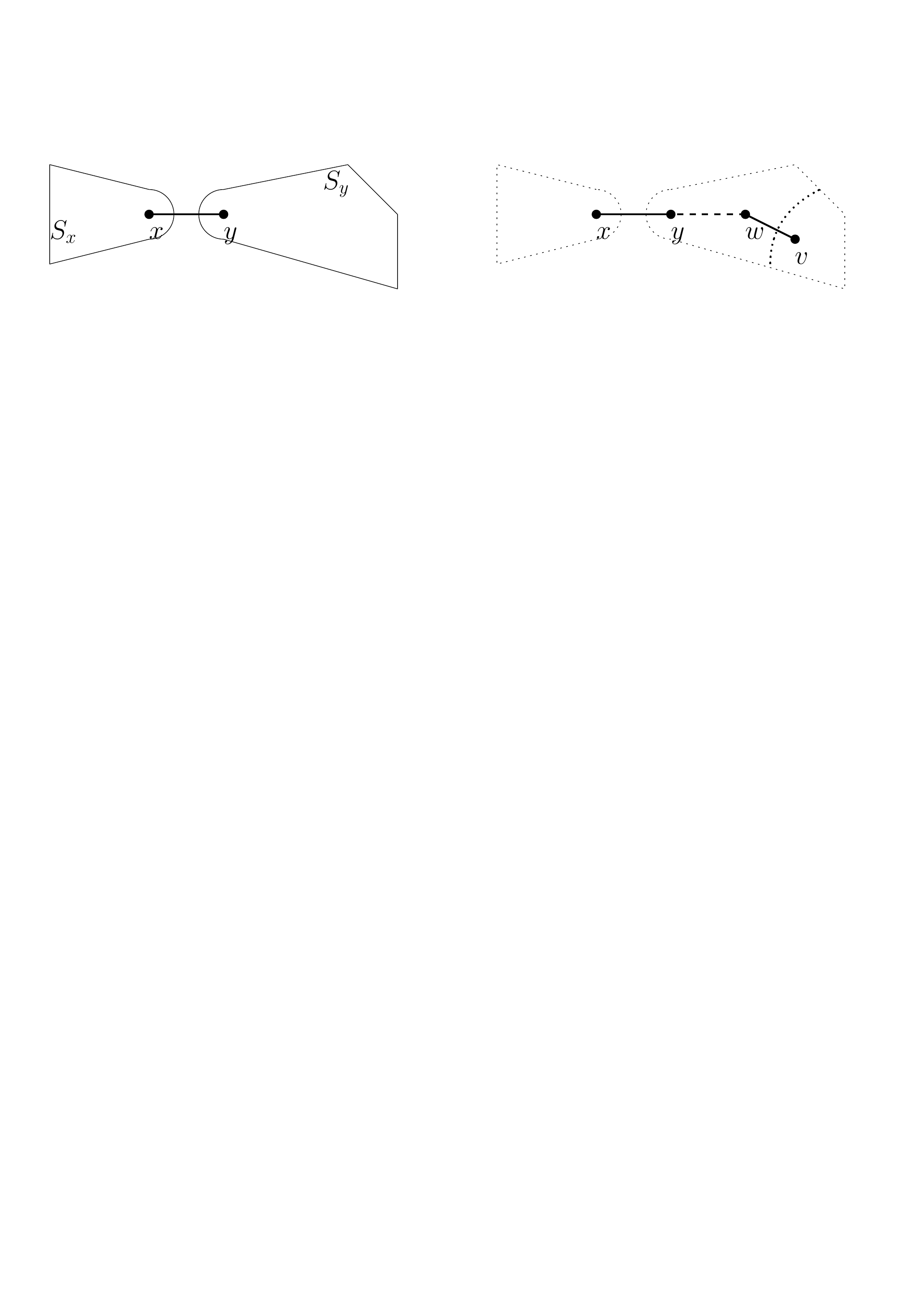}
  \caption{Left: the two subtrees $S_x$ and $S_y$ separated by the edge $\{x,y\}$. Right: for $v\in S_y$, let $w$ be its neighbor on its path to $y$. Then $S_w$ and $S_v$ are the two subtrees separated by $\{w,v\}$. Every vertex to the left of the dotted arc between $w$ and $v$ belongs to $S_w$, while every vertex to the right belongs to $S_v$.}
  \label{fig:tree-ds-cut}
\end{figure}

Then $\dsw(x) + f(S_x) = \dsw(y) + f(S_y)$. This also implies that $\dsw(x) < \dsw(y)$ if and only if $f(S_y) < f(S_x)$.
\end{lemma}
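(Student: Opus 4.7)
The plan is to use the fact that removing an edge of a tree disconnects it into exactly two subtrees, giving a clean two-way partition of $V(T)$ that lets me evaluate $\dsw(x) - \dsw(y)$ term by term.

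First I would record the structural observation. Because $T$ is a tree and $\{x,y\}$ is an edge, the graph $T \setminus \{x,y\}$ has exactly two connected components; call them $T_x \ni x$ and $T_y \ni y$. Matching this against the definition of $S_x$ and $S_y$, one sees that $V(T_x) = S_x$ and $V(T_y) = S_y$, and hence $V(T) = S_x \sqcup S_y$. Moreover, for every $v \in S_x$ the unique path from $v$ to $y$ in $T$ must cross the edge $\{x,y\}$, so it decomposes as the $v$-to-$x$ path inside $T_x$ followed by the edge $\{x,y\}$; this yields the distance identity $d(v,y) = d(v,x) + 1$. Symmetrically, $d(v,x) = d(v,y) + 1$ for every $v \in S_y$.

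Next I would compute $\dsw(x) - \dsw(y)$ directly. Writing
\[
\dsw(x) - \dsw(y) = \sum_{v \in V(T)} \bigl(d(x,v) - d(y,v)\bigr)\, f(v)
\]
and splitting the sum along $V(T) = S_x \sqcup S_y$, the distance identities from the previous paragraph give $d(x,v) - d(y,v) = -1$ for $v \in S_x$ and $d(x,v) - d(y,v) = +1$ for $v \in S_y$. Thus
\[
\dsw(x) - \dsw(y) = -\sum_{v \in S_x} f(v) + \sum_{v \in S_y} f(v) = f(S_y) - f(S_x),
\]
which rearranges to the stated equality $\dsw(x) + f(S_x) = \dsw(y) + f(S_y)$. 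The equivalence $\dsw(x) < \dsw(y) \iff f(S_y) < f(S_x)$ is then immediate from the sign of $\dsw(x) - \dsw(y)$.

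There is no real obstacle here; the lemma is a bookkeeping calculation. The only point that truly needs tree-ness is the step $d(v,y) = d(v,x) + 1$ for $v \in S_x$: in a general graph, $v$ might reach $y$ via a shorter route that avoids $x$, and the identity would fail. Once one is careful to invoke that edge-removal in a tree yields precisely two subtrees, the rest of the argument is a one-line split of the distance sum.
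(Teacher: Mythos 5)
Your proposal is correct and follows essentially the same route as the paper: both split $V(T)$ into $S_x \sqcup S_y$ across the edge $\{x,y\}$, use the distance shift $d(x,v) = d(y,v) \pm 1$ on each side, and compare the two weighted sums (the paper expands $\dsw(x)$ and $\dsw(y)$ separately and subtracts, whereas you take the difference directly, which is a cosmetic distinction). Your added remark on where tree-ness is actually used is a nice touch but does not change the substance.
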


\begin{proof}
We begin by deriving $\dsw(x)$:
\[ \dsw(x)
  = \sum_{v\in G} d(x,v) \cdot f(v)
  = \sum_{v\in S_x} \big[ d(x,v) \cdot f(v) \big]
 + \sum_{v\in S_y} \big[ d(x,v) \cdot f(v) \big] .\]
Noting that $d(x,v) = d(x,y) + d(y,v) = 1 + d(y,v)$. the second sum can be routinely transformed into:
\[ \sum_{v\in S_y} \big[ d(x,v) \cdot f(v) \big] = f(S_y) + \sum_{v\in S_y} \big[ d(y,v) \cdot f(v) \big].\]
Hence,
\[ \dsw(x) = \sum_{v\in S_x} \big[ d(x,v) \cdot f(v) \big] + f(S_y) + \sum_{v\in S_y} \big[ d(y,v) \cdot f(v) \big] .\]
The exact same argument also yields:
\[ \dsw(y) =
  \sum_{v\in S_x} [ d(x,v) \cdot f(v) ]
  + f(S_x) + \sum_{v\in S_y} [ d(y,v) \cdot f(v) ] .\]
Therefore, after subtracting these two equations and rearranging, we obtain the lemma's statement.
\end{proof}

\begin{corollary}\label{cor:vert-wei-trees-1}
Let $T$ a vertex-weighted tree with $f$ as the vertex-weight function, and let $x, y, S_x, S_y$ be defined in the same way as in Lemma~\ref{lma:tree-vw-ds-cut}. 

Then $\dsw(x) < \dsw(y)$ implies that for all $v\in S_y : \dsw(x) < \dsw(v)$.
\end{corollary}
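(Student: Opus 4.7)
Proof proposal. The plan is to argue by induction on the tree-distance $d(y,v)$. Because $v\in S_y$, the unique path from $v$ to $x$ in $T$ passes through $y$, so $d(y,v)$ is a natural induction parameter. The base case $d(y,v)=0$, i.e.\ $v=y$, is immediate: the conclusion $\dsw(x)<\dsw(v)$ reduces to the hypothesis $\dsw(x)<\dsw(y)$.

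For the inductive step, let $v\in S_y$ with $v\neq y$, and let $w$ be the neighbor of $v$ lying on the unique path from $v$ to $y$. Then $w\in S_y$ and $d(y,w)<d(y,v)$, so the inductive hypothesis supplies $\dsw(x)<\dsw(w)$. It therefore suffices to produce the single additional inequality $\dsw(w)<\dsw(v)$. Applying Lemma~\ref{lma:tree-vw-ds-cut} to the edge $\{w,v\}$, this reduces to showing $f(A)<f(B)$, where $A$ and $B$ are the two subtrees separated by $\{w,v\}$ and containing $v$ and $w$ respectively.

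To produce the inequality $f(A)<f(B)$, I would compare $A$ and $B$ to the original sets $S_x$ and $S_y$ arising from the edge $\{x,y\}$. Since $w$ lies strictly between $y$ and $v$ on the path from $y$ to $v$, deleting the edge $\{w,v\}$ keeps $y$ together with all of $S_x$ on the $w$-side, giving $S_x\cup\{y\}\subseteq B$. Symmetrically, every vertex of $A$ must reach $x$ by first passing through $w$ and then through $y$, so $A\subseteq S_y\setminus\{y\}$. The positivity of the weight function (Definition~\ref{def:vert-wei}) then converts these inclusions into the strict weight inequalities $f(B)>f(S_x)$ and $f(A)<f(S_y)$. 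Chaining with the hypothesis rewritten via Lemma~\ref{lma:tree-vw-ds-cut} on $\{x,y\}$ as $f(S_y)<f(S_x)$ yields $f(A)<f(S_y)<f(S_x)<f(B)$, so Lemma~\ref{lma:tree-vw-ds-cut} applied on $\{w,v\}$ delivers $\dsw(w)<\dsw(v)$ and closes the induction.

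The main obstacle is bookkeeping rather than any deep idea: one must carefully verify the two subtree-containment claims above by tracing, for each vertex of $S_x$, of $\{y\}$, and of $A$, which side of $\{w,v\}$ it lies on, and then appeal to the positivity of vertex-weights to upgrade the set-inclusions to the strict weight inequalities needed. No further machinery is required.
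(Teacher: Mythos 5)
Your proposal is correct and follows essentially the same route as the paper's proof: both induct on $d(y,v)$, apply Lemma~\ref{lma:tree-vw-ds-cut} once on $\{x,y\}$ to rewrite the hypothesis as $f(S_y)<f(S_x)$ and once on the edge $\{w,v\}$, and close the gap via the same subtree containments (your $A$ and $B$ are the paper's $S_v$ and $S_w$). You merely spell out the induction and the containments slightly more explicitly than the paper does.
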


\begin{proof}
For every $v\in S_y$, let $w\in S_y$ be the neighbor of $v$ that lies on the path between $y$ and $v$. Define $S_w$ to be the set of vertices whose paths to $v$ pass through $w$, and similarly for $S_v$. (See Figure~\ref{fig:tree-ds-cut}.)

Then $S_x\subset S_w$ and $S_v\subset S_y$. Since all the vertex-weights are positive, these containments imply
\begin{equation}\label{eq:cor-dsw}
  f(S_x) < f(S_w) \text{~and~} f(S_v) < f(S_y).
\end{equation}

Due to the premise $\dsw(x) < \dsw(y)$, Lemma~\ref{lma:tree-vw-ds-cut} implies that $f(S_y) < f(S_x)$. Combining this with (\ref{eq:cor-dsw}) leads to $f(S_v) < f(S_w)$.

Then by Lemma~\ref{lma:tree-vw-ds-cut} again, we have $\dsw(w) < \dsw(v)$. Finally, using routine induction on $d(y,v)$, we can extend the observation above to the entire $S_y$, and conclude that $\dsw(x) < \dsw(v)$ for all $v\in S_y$.
\end{proof}

\begin{corollary}\label{cor:vert-wei-trees-2}
Let $T$ a vertex-weighted tree with $f$ as the vertex-weight function, and let $x, y, S_x, S_y$ be defined in the same way as in Lemma~\ref{lma:tree-vw-ds-cut}. 

Then $\dsw(x) = \dsw(y)$ implies that $\{x,y\}$ is the median.
\end{corollary}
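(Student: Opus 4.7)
The plan is to translate the hypothesis via Lemma~\ref{lma:tree-vw-ds-cut}, which gives the equivalence $\dsw(x) = \dsw(y) \iff f(S_x) = f(S_y)$. The goal then reduces to showing that every vertex $v \notin \{x, y\}$ strictly satisfies $\dsw(v) > \dsw(x)$, whence $\{x, y\}$ uniquely minimises the distance-sum and hence forms the median.

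To prove this, I would pick any $v \notin \{x, y\}$ and assume without loss of generality that $v \in S_x \setminus \{x\}$; the case $v \in S_y \setminus \{y\}$ is symmetric. Let $w$ be the neighbor of $v$ on the unique simple path from $v$ to $x$, and apply Lemma~\ref{lma:tree-vw-ds-cut} to the edge $\{w, v\}$, writing $S_w$ and $S_v$ for the two components it separates. The key observation is that $S_y \cup \{x\} \subseteq S_w$: every vertex of $S_y$ must travel through $x$ before reaching anything in $S_x$, and from $x$ to $v$ the path continues $x, \ldots, w, v$, so the path passes through $w$. Since $x \notin S_y$ and all vertex-weights are strictly positive, this yields $f(S_w) \geq f(S_y) + f(x) > f(S_y) = f(S_x) > f(S_v)$, and so Lemma~\ref{lma:tree-vw-ds-cut} applied to $\{w, v\}$ gives $\dsw(v) > \dsw(w)$.

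I would then close the argument by induction on $d(v, x)$. The base case $d(v, x) = 1$ forces $w = x$, so $\dsw(v) > \dsw(w) = \dsw(x)$ immediately. For $d(v, x) \geq 2$, the vertex $w$ lies in $S_x \setminus \{x\}$ with $d(w, x) = d(v, x) - 1$, so the inductive hypothesis gives $\dsw(w) > \dsw(x)$, and combining this with $\dsw(v) > \dsw(w)$ yields $\dsw(v) > \dsw(x)$. By the symmetric argument applied to $S_y$, every vertex outside $\{x, y\}$ has strictly larger $\dsw$ than both $x$ and $y$, so $\{x, y\}$ is exactly the median.

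The main step to take care of is the strict inequality $f(S_w) > f(S_v)$: this crucially relies on $x$ sitting in $S_w$ but not in $S_y$, so that the equal weights $f(S_x) = f(S_y)$ are tipped once we re-cut the tree at $\{w, v\}$. Once this observation is in place, Lemma~\ref{lma:tree-vw-ds-cut} together with the easy induction on distance to $x$ carries the rest of the argument with no surprises.
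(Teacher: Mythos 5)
Your proof is correct and takes essentially the same route as the paper: both translate $\dsw(x)=\dsw(y)$ into $f(S_x)=f(S_y)=f(T)/2$ via Lemma~\ref{lma:tree-vw-ds-cut} and then show that crossing any edge leading away from $\{x,y\}$ strictly increases the weighted distance-sum, so $x$ and $y$ are the unique minimizers. The only difference is organizational: your explicit induction on $d(v,x)$ re-derives the outward-monotonicity that the paper gets by verifying the inequality just for the neighbors of $x$ and $y$ and then invoking Corollary~\ref{cor:vert-wei-trees-1}.
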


\begin{proof}
By Lemma~\ref{lma:tree-vw-ds-cut}, $\dsw(x) = \dsw(y)$ is equivalent to $f(S_x) = f(S_y)$. This means that $f(S_x) = f(T)/2$, where $T$ is the entire tree.
Without loss of generality, consider a vertex $v\in S_x$ such that $v\sim x$. With respect to the edge $\{v,x\}$, let $A_x$ be the subtree on the side of $x$, and $A_v$ the subtree on the side of $v$.

Now, as $A_x = S_x \cup \{x\}$, we have $f(A_v) < f(A_x)$, and therefore $\dsw(v) > \dsw (x)$. Finally, apply Corollary~\ref{cor:vert-wei-trees-1} to every such $v$ in both $S_x$ and $S_y$, we conclude that $\dsw(x)$ and $\dsw(y)$ are indeed the minimum. Therefore $\{x,y\}$ is the vertex-weighted median.
\end{proof}

\begin{corollary}\label{cor:vert-wei-trees-med}
The median of a vertex-weighted tree consists of either one vertex or two adjacent vertices.
\end{corollary}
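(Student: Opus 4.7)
The plan is to argue by contradiction, leveraging the dichotomy furnished by Corollaries~\ref{cor:vert-wei-trees-1} and~\ref{cor:vert-wei-trees-2}. I would suppose the median $M$ contains at least two vertices, pick any $u,v\in M$ with $\dsw(u)=\dsw(v)=\min_{w\in T}\dsw(w)$, and examine the unique tree-path $u=p_0,p_1,\ldots,p_k=v$. The case $k=1$ would be handled immediately by Corollary~\ref{cor:vert-wei-trees-2}, which forces $M=\{u,v\}$ as soon as two adjacent vertices both minimize $\dsw$.

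The substance would be in ruling out $k\geq 2$. I would focus on the adjacent pair $\{u,p_1\}$: since $u\in M$, we must have $\dsw(u)\leq\dsw(p_1)$, so two subcases arise. In the strict case $\dsw(u)<\dsw(p_1)$, Corollary~\ref{cor:vert-wei-trees-1} would give $\dsw(u)<\dsw(w)$ for every $w$ in the subtree $S_{p_1}$ on the $p_1$-side of $\{u,p_1\}$; applying this to $v\in S_{p_1}$ would contradict $\dsw(u)=\dsw(v)$. In the equality case $\dsw(u)=\dsw(p_1)$, Corollary~\ref{cor:vert-wei-trees-2} would pin the median as $\{u,p_1\}$, contradicting $v\in M$ together with $d(u,v)=k\geq 2$. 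Either branch would yield the desired contradiction, so $k=1$.

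Having shown that any two elements of $M$ must be adjacent, one final invocation of Corollary~\ref{cor:vert-wei-trees-2} on any such adjacent pair would lock the median at size exactly two, since a third vertex would have to be adjacent to both by the same argument, forming a triangle and contradicting the tree structure. I do not anticipate a genuine obstacle here: the preceding corollaries essentially do all of the work, and the only mild point to confirm is that $v$ really does lie in the subtree $S_{p_1}$ of Lemma~\ref{lma:tree-vw-ds-cut}, which is immediate since $S_{p_1}$ is by definition the set of vertices whose path to $u$ passes through $p_1$.
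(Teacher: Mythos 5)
Your proof is correct, but it reaches the conclusion by a different route than the paper. The paper also reduces to showing that any two median vertices are adjacent, but it works directly from Lemma~\ref{lma:tree-vw-ds-cut}: taking two minimizers $v_1,v_p$ joined by a path $v_1,\ldots,v_p$, it applies the lemma at \emph{both} ends to get $f(S_1)\geq\sum_{j\geq 2}f(S_j)$ and $f(S_p)\geq\sum_{j\leq p-1}f(S_j)$, and sums these to force $0\geq\sum_{j=2}^{p-1}f(S_j)$, contradicting positivity of the weights when $p\geq 3$. You instead examine only the first edge $\{u,p_1\}$ of the path and dispatch the two subcases with the already-proved Corollaries~\ref{cor:vert-wei-trees-1} (strict inequality propagates into the far subtree, so $v$ cannot also be a minimizer) and~\ref{cor:vert-wei-trees-2} (equality pins the median to $\{u,p_1\}$, excluding $v$). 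Your argument is the more economical one given that those corollaries are stated first in the paper, and it is one-sided where the paper's is symmetric; the paper's version has the virtue of depending only on the lemma, so it would survive even if the two corollaries were reordered or removed. Both correctly finish with the observation that pairwise adjacency in a triangle-free graph caps the median at two vertices (your extra appeal to Corollary~\ref{cor:vert-wei-trees-2} at that point is harmless but not needed). The one detail you flagged --- that $v$ lies in $S_{p_1}$ --- is indeed immediate from the definition in Lemma~\ref{lma:tree-vw-ds-cut}, so there is no gap.
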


\begin{proof}
Firstly, it is easy to construct examples of vertex-weighted trees with medians being a single vertex or two adjacent vertices.
Secondly, it suffices to show that in a tree $T$ with vertex-weight function $f$, any two vertices in the vertex-weighted median are adjacent. This not only implies that the vertex-weighted median is connected, but also excludes the possibility of the median having three or more vertices. 

Let $v_1$ and $v_p$ be vertices with the minimum $\dsw$ value, and suppose they are separated by a path $v_2,\ldots v_{p-1}$.
This is shown in Figure~\ref{fig:tree-med-thm}, where $S_1,\ldots S_p$ indicate the subtrees of the vertices $v_1,\ldots v_p$.

\begin{figure}[t]
  \centering
  \includegraphics[scale=0.7]{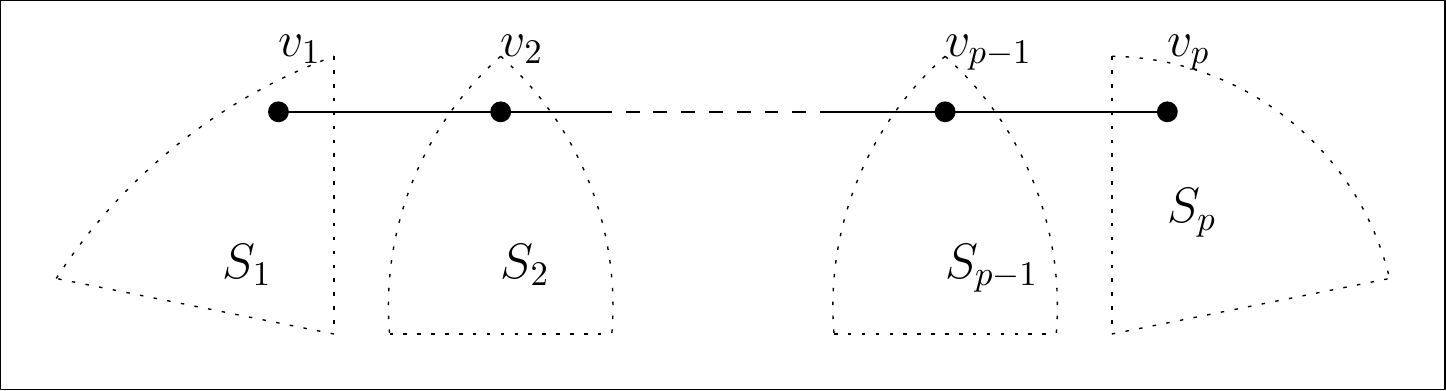}
  \caption{For the proof of Corollary~\ref{cor:vert-wei-trees-med}.}
  \label{fig:tree-med-thm}
\end{figure}

Consider the adjacent vertices $v_1$ and $v_2$. Since $v_1$ has the minimum $\dsw$ value, $\dsw(v_1) \leq \dsw(v_2)$. Then by Lemma~\ref{lma:tree-vw-ds-cut},
\begin{equation}\label{eq:appendix-1}
f(S_1) \geq \sum_{j=2}^{p} f(S_j) .
\end{equation}
Similarly, consider the adjacent vertices $v_{p-1}$ and $v_p$. Since $v_p$ has the minimum $\dsw$ value, $\dsw(v_p) \leq \dsw(v_{p-1})$, and hence
\begin{equation}\label{eq:appendix-2}
  f(S_p) \geq \sum_{j=1}^{p-1} f(S_j) .
\end{equation}
Summing (\ref{eq:appendix-1}) and~(\ref{eq:appendix-2}) leads to
$0 \geq \sum_{j=2}^{p-1} f(S_j)$.
But by Definition~\ref{def:vert-wei}, the weights of vertices are all positive, so this is a contradiction.
Therefore, two vertices with the minimum $\dsw$ value must be adjacent, and hence the corollary holds.
\end{proof}

With these basics of vertex-weighted graphs in place, we move on to define how vertex-weights are incorporated into the framework of partition-graphs.

\begin{definition}
Given a partition on a graph $G$, the \emph{vertex-weighted partition-graph} $\ol{G}$ is defined as follows.
\begin{itemize}
  \item The vertices and edges of $\ol{G}$ are the same as in Definition~\ref{def:partition-graph}.
  \item The weight of each vertex $X$ in $\ol{G}$ is the cardinality of its corresponding subset of $G$.
\end{itemize}
\end{definition}

Now we can state and prove the main theorem of this section. On the notation, the super-vertices in $\ol{G}$ are denoted using capital letters, and the distance-sum of a super-vertex $X$ in $\ol{G}$ is denoted by $\dsw(X)$. Since there is little chance of ambiguity, we overload the notation for convenience.

\begin{theorem}\label{thm:median-preserved}
Let $T$ be a tree, and let $\ol{T}$ denote the vertex-weighted partition-graph induced by \emph{any} partition on $T$. Then every super-vertex in the median of $\ol{T}$ contains a vertex in the median of $T$.
\end{theorem}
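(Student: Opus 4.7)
The plan is to show that for each super-vertex $X$ in the median of $\ol{T}$, the vertex $v \in X$ that minimizes $\ds_T$ among vertices of $X$ is in fact a global minimizer of $\ds_T$ over all of $T$, so $v$ lies in the median of $T$, proving the claim.

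First, I would establish the structural correspondence between $T$ and $\ol{T}$. Since $T$ is a tree, between $X$ and each super-neighbor $Y_i$ (for $1 \leq i \leq k$) there is a unique $T$-edge $\{x_i, y_i\}$ with $x_i \in X$ and $y_i \in Y_i$; any additional such edge would force a cycle in $T$. Removing $X$ from $T$ produces exactly $k$ components, and the component containing $y_i$ is precisely the subtree $T_{y_i}$ of $T$ on the $y_i$-side of $\{x_i, y_i\}$. Because every super-vertex is connected, no super-vertex is split by this cut, so the super-vertices lying in $\ol{S}_{Y_i}$ are exactly those contained in $T_{y_i}$; hence $f(\ol{S}_{Y_i}) = |T_{y_i}|$ and symmetrically $f(\ol{S}_X) = |T_{x_i}|$, where $f$ is the cardinality weight on $\ol{T}$.

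Applying Lemma~\ref{lma:tree-vw-ds-cut} to $\ol{T}$, the premise that $X$ is in the median of $\ol{T}$ gives $\dsw(X) \leq \dsw(Y_i)$ for every $i$, equivalent to $|T_{x_i}| \geq |T_{y_i}|$ for every $i$. Now let $v \in X$ minimize $\ds_T$ over $X$, and verify $\ds_T(v) \leq \ds_T(u)$ for every $T$-neighbor $u$ of $v$. If $u \in X$, this is immediate by the choice of $v$. If $u \notin X$, then $v = x_i$ and $u = y_i$ for some $i$, and Lemma~\ref{lma:tree-vw-ds-cut} applied to $T$ with unit weights yields $\ds_T(v) \leq \ds_T(u)$ iff $|T_v| = |T_{x_i}| \geq |T_{y_i}| = |T_u|$, which has just been shown. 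Hence $v$ satisfies $\ds_T(v) \leq \ds_T(u)$ at every $T$-neighbor $u$, and Corollary~\ref{cor:vert-wei-trees-1} (together with Corollary~\ref{cor:vert-wei-trees-2} to handle any equality on a neighbor edge) propagates this local minimality to the whole tree, placing $v$ in the median of $T$.

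The main obstacle I expect is the structural identification $f(\ol{S}_{Y_i}) = |T_{y_i}|$: one must justify that no super-vertex on the $Y_i$-side of $\ol{T}$ is split by the $T$-edge $\{x_i, y_i\}$, which relies on the connectivity of super-vertices combined with the tree property of $T$. Once this identification is in hand, the remainder is a clean bookkeeping combination of Lemma~\ref{lma:tree-vw-ds-cut} applied at both the $T$-scale and the $\ol{T}$-scale.
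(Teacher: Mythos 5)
Your proof is correct, and it rests on the same central identification as the paper's --- namely that for the unique $T$-edge $\{x_i,y_i\}$ joining $X$ to a neighboring super-vertex $Y_i$, the weighted side-sum in $\ol{T}$ equals the unweighted side-count in $T$ (in the paper's notation, $f(\ol{S}_{Y_i}) = |S_{y_i}|$), which both arguments justify via connectivity of super-vertices; you then invoke the same toolkit, Lemma~\ref{lma:tree-vw-ds-cut} at both scales together with Corollaries~\ref{cor:vert-wei-trees-1} and~\ref{cor:vert-wei-trees-2}. The organization differs, though: the paper first applies Corollary~\ref{cor:vert-wei-trees-med} to split into the cases where the median of $\ol{T}$ is one super-vertex (strict inequalities, concluding that every vertex outside $X$ is non-minimal, so the median of $T$ is trapped inside $X$) or two adjacent super-vertices (equality, pinning the median of $T$ to the pair $\{x,y\}$ exactly). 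You instead avoid that case split entirely: for each median super-vertex $X$ you exhibit a concrete witness, the minimizer of $\ds$ over $X$, verify that it is a local minimizer of $\ds$ in all of $T$ (inside $X$ by choice, across boundary edges by the transferred inequality $|S_{x_i}| \geq |S_{y_i}|$), and promote local to global minimality using the two corollaries. Your version is slightly more uniform and constructive --- it names the median vertex rather than arguing by exclusion --- at the cost of needing the local-to-global step made explicit; the paper's version gets that step for free from the strict/equality dichotomy but pays with the case analysis. Both are sound.
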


\begin{proof}
Since the vertex-weighted $\ol{T}$ is still a tree, its median is either a single super-vertex of two adjacent super-vertices (Corollary~\ref{cor:vert-wei-trees-med}), so we have the following two cases.

For the first case, let $X$ be the only super-vertex in the median of $\ol{T}$. Then by definition, for every neighbor $Y$ of $X$, $\dsw(X) < \dsw(Y)$.
Let $\ol{S}_X$ denote the set of super-vertices that have to pass through $X$ in order to reach $Y$, and let $\ol{S}_Y$ be the analogous counterpart.
Then by Lemma~\ref{lma:tree-vw-ds-cut}, $f(\ol{S}_X) > f(\ol{S}_Y)$.

Let $x\in X$ and $y\in Y$ such that $x$ and $y$ are adjacent in $T$. Define $S_x$ to be the set of vertices whose paths to $y$ pass through $x$, and define $S_y$ analogously.
Now observe that $f(\ol{S}_X) = |S_x|$ and $f(\ol{S}_Y) = |S_y|$.
This means that $|S_x| > |S_y|$ and hence $\dsw(x) < \dsw(y)$.
By Corollary~\ref{cor:vert-wei-trees-1}, every vertex $v\in S_y$ has a bigger distance-sum than $x$. Since every vertex not in $X$ does not have the minimum distance-sum, so the median-vertices of $T$ must be inside $X$.

For the second case, let $X$ and $Y$ be the two adjacent super-vertices in the median of $\ol{T}$. Let $x\in X$ and $y\in Y$ be the corresponding adjacent vertices in $T$.
In addition, define $\ol{S}_X$, $\ol{S}_Y$, $S_x$ and $s_y$ as before. Now $\dsw(X) = \dsw(Y)$ implies $f(\ol{S}_X) = f(\ol{S}_Y)$.
This further means that $|S_x| = |S_y|$ and hence $\dsw(x) = \dsw(y)$.
Finally, using Corollary~\ref{cor:vert-wei-trees-2}, $x$ and $y$ are the two median-vertices of $T$.
\end{proof}

\section{Conclusion}\label{sec:conclusion}

Quasi-isometries capture the general notion of distance-approximation, and this paper introduced them into the field of graph simplification. The goals of quasi-isometric graph simplification were listed in Section~\ref{sec:setup}. After laying down the basics of quasi-isometries in Section~\ref{sec:q-iso}, we presented some constructions of quasi-isometric graph simplifications, and evaluated them against the goals outlined in Section~\ref{sec:setup}.

The first construction called \emph{MIS-derived graphs} was presented in Section~\ref{sec:q-iso-ind-set}. This construction is based on maximal independent sets. Though it has \emph{small quasi-isometry constants} and is mathematically interesting, it does not satisfy the \emph{compression} property in general, nor does it satisfy the \emph{retention} property for trees.

The main construction we studied was \emph{partition-graphs}. Section~\ref{sec:partition-graphs} showed that partition-graphs satisfy the first two goals of \emph{small quasi-isometry constants} and \emph{compression}, given suitable values of sharpness and coarseness. We then focused on trees, where partition-graphs satisfy the \emph{retention} property, as every partition-graph of a tree remains a tree (Corollary~\ref{cor:pgrf-tree-retention}). As for the \emph{preservation} property, Sections~\ref{sec:outward-contraction} and~\ref{sec:vertex-weights} presented constructions of partition-trees that preserve the center and the median, respectively. 

For future work, the most pressing step is to address the \emph{efficiency} criterion. This entails devising concrete algorithms that efficiently construct partition-graphs or other quasi-isometric graph simplifications. Meanwhile, as a graph can have multiple possible partition-graphs, one could explore the possibility of employing the theory of random graphs to investigate the ``average'' properties among all the possible partition-graphs of a given graph.






\bibliographystyle{plain}
\bibliography{ref-qiso-grf}

\end{document}